\documentclass[12pt]{amsart}
\usepackage[letterpaper, margin=1in]{geometry}
\usepackage{url}
\usepackage{hyperref}
\usepackage{amsmath}
\usepackage{amssymb}
\usepackage{amscd}
\usepackage{algorithm}
\usepackage{xcolor}
\usepackage{algorithmic}

\usepackage{tikz}
\newtheorem{theorem}{Theorem}[section]
\newtheorem{lemma}[theorem]{Lemma}
\newtheorem{corollary}[theorem]{Corollary}
\newtheorem{proposition}[theorem]{Proposition}
\theoremstyle{remark}

\newtheorem{remark}[theorem]{Remark}
\numberwithin{equation}{section}
\newtheorem{definition}[theorem]{Definition}

\begin{document}
\title{Worst--Case to Average--Case Reductions for SIS over integers}
\author{Konstantinos A. Draziotis}
\email{drazioti@csd.auth.gr}
\author{Myrto Eleftheria  Gkogkou}
\email{myrtoele@csd.auth.gr}
\address{School of Computer Science, Aristotle University of Thessaloniki, Thessaloniki, Greece}
\keywords{Lattices, Worst--case to Average--case reductions,  Short Integer Solution problem, Siegel's Lemma.}
\thanks{Both authors were co-funded by SECUR-EU. The SECUR-EU project funded under Grant Agreement 101128029 is supported by the European Cybersecurity Competence Centre.}%
\subjclass[2020]{Primary 94A60.}

\maketitle
\begin{abstract}

 In the present paper we study a non-modular variant of the Short Integer Solution problem over the integers. Given a random matrix $A \in \mathbb{Z}^{n\times m}$ with entries  $a_{ij}$ such that $0\le a_{ij}< Q,$ for some $Q>0,$ the goal is to find a nonzero vector ${\bf x}\in\mathbb{Z}^m$ such that $A{\bf x}={\bf 0}$ and $\|{\bf x}\|_\infty \le \beta,$ for a given bound $\beta.$ We show that an algorithm that solves random instances of this problem with non-negligible probability yields a polynomial-time algorithm for approximating $\mathrm{SIVP}$ within a factor $\widetilde{O}(n^{3/2})$ (with $\ell_2$ norm) in the worst case for any $n-$dimensional integer lattice. 

\end{abstract}

\section{Introduction}
Since Ajtai’s seminal work~\cite{ajtai96}, which established the first connection between worst-case and average-case lattice problems, substantial progress has been made in constructing cryptographic systems grounded in worst-case lattice assumptions. The significance of this connection lies in its potential to enable the use of lattices as a foundation for creating robust cryptosystems.  For example, NIST has initiated a process to solicit, evaluate, and standardize one or more quantum-resistant public-key cryptographic algorithms. 

Lattice-based cryptographic schemes present several key advantages. Firstly, they are highly efficient and straightforward, typically relying on basic linear operations with small integers. Secondly, unlike cryptographic methods based on factoring or discrete logarithms, they have shown resilience against quantum attacks. Lastly, they offer strong security guarantees by ensuring that random instances are as hard as some worst-case problems in lattices.

\subsection{Our Contribution}
In the present work we introduce an average-case problem that, if solvable, implies a solution to a hard lattice problem. 
More precisely, we consider a variant of plain SIS, but we work on the ring ${\mathbb{Z}}$ instead of ${\mathbb{Z}}_q.$ The average problem is $A{\bf X}={\bf 0}$ for $A$ an $n\times m$ random matrix and we search for ${\bf X}\in {\mathbb{Z}}^m$ such that $\| {\bf X}\|_{\infty}\leq\beta$ (for some $\beta>0$ usually small). The norm $\ell_\infty$ is defined by $\|{\bf X}\|_\infty=\max_{1\le j\le m}\{|X_j|\}.$  If we can solve the previous for uniform matrices $A,$ whose entries are chosen from some finite set ${\mathcal{S}}\subset {\mathbb{Z}}$, we can solve $SIVP_{\tilde{O}(n^{1.5})}$ (with $\ell_2$ norm) for any integer lattice in the worst case. We prove the following Theorem.
\begin{theorem}\label{theorem}
    For an integer \( m = O(n^2) \), if there exists a polynomial-time probabilistic algorithm that solves $\ell_{\infty}-{\rm{SIS}}_{\mathbb{Z}}$ for uniformly random \( A \in {\mathcal{C}}_Q^{n\times m} \) ($Q$ is a positive integer, ${\mathcal{C}}_Q=\{0,1,...,Q-1\})$, then the SIVP problem can be approximated in polynomial time within a factor of 
    \(\tilde{O}(n^{1.5}) \) in any \( n \)-dimensional integer lattice.
\end{theorem}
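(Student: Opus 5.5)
The reduction will follow the Ajtai/Micciancio--Regev template, adapted to work over $\mathbb{Z}$ rather than $\mathbb{Z}_q$. We are given an $n$-dimensional integer lattice $\mathcal{L}=\mathcal{L}(B)$ together with a current set $S=\{{\bf s}_1,\dots,{\bf s}_n\}\subset\mathcal{L}$ of linearly independent vectors; initially $S$ is the input basis, or an LLL-reduced version of it. We repeatedly replace the longest vector of $S$ by a lattice vector of at most half its length. We stop when we can no longer make progress, and then argue that $\|S\|\le \tilde O(n^{1.5})\lambda_n(\mathcal{L})$. Since $\|S\|$ starts at most $2^{\mathrm{poly}(n)}\lambda_n$ and halves at each successful step, polynomially many steps suffice.

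\emph{One step.} Fix a parameter $s\approx \eta_\epsilon(\mathcal{L})\le \sqrt{\log n}\,\lambda_n$ (smoothing parameter), and suppose $\|S\|$ is much larger than $s$.
\begin{enumerate}
\item For $j=1,\dots,m$, sample ${\bf y}_j$ from a continuous Gaussian of width $s$ and reduce it modulo the parallelepiped $\mathcal{P}(S)$. Above the smoothing parameter, ${\bf c}_j={\bf y}_j \bmod \mathcal{P}(S)$ is statistically close to uniform on $\mathcal{P}(S)$.
\item Write ${\bf c}_j=S{\bf t}_j$ with ${\bf t}_j\in[0,1)^n$, and round $\lfloor Q{\bf t}_j\rfloor$ to obtain ${\bf a}_j\in\mathcal{C}_Q^n$. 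These vectors are statistically close to uniform, so $A=({\bf a}_1|\cdots|{\bf a}_m)$ is a valid uniform instance.
\item Call the oracle to obtain ${\bf x}\ne{\bf 0}$ with $A{\bf x}={\bf 0}$ over $\mathbb{Z}$ and $\|{\bf x}\|_\infty\le\beta$. Here Siegel's lemma is used to pick $\beta$: with $m=O(n^2)$ and $\beta\approx (mQ)^{n/(m-n)}=O(1)$, a solution is guaranteed to exist, so the problem is nonvacuous. Moreover $\|{\bf x}\|_2\le\beta\sqrt m=O(n)$.
\item Output ${\bf v}=\sum_j x_j({\bf y}_j-{\bf c}_j)\in\mathcal{L}$, since each ${\bf y}_j-{\bf c}_j$ is a lattice vector.
\end{enumerate}
The key point is that, because $A{\bf x}={\bf 0}$ exactly over $\mathbb{Z}$, no modular correction term $q^{-1}S(A{\bf x})$ appears. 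We get
\[
\sum_j x_j{\bf c}_j=\frac{1}{Q}S\sum_j x_j{\bf a}_j+\sum_j x_jS\,\boldsymbol{\delta}_j=\sum_j x_jS\,\boldsymbol{\delta}_j,
\]
where the rounding errors satisfy $\|\boldsymbol{\delta}_j\|_\infty<1/Q$.

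\emph{Length bound.} We have
\[
\|{\bf v}\|\le\Big\|\sum_j x_j{\bf y}_j\Big\|+\sum_j|x_j|\,\|S\boldsymbol{\delta}_j\|.
\]
\begin{itemize}
\item For the Gaussian term, condition on the ${\bf c}_j$. The ${\bf y}_j$ are then discrete Gaussians on cosets of width $s$, so $\|\sum x_j{\bf y}_j\|\lesssim s\|{\bf x}\|_2\sqrt n$ with high probability. This gives $O(s\,n\sqrt n)=\tilde O(n^{1.5})\lambda_n$.
\item The rounding term is at most $m\beta\sqrt n\|S\|/Q$. Choosing $Q$ a large polynomial, $Q\ge 4m\beta\sqrt n$, makes this at most $\|S\|/4$. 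Note that $\beta$ grows only logarithmically in $Q$ when $m\gg n$, so this choice is consistent.
\end{itemize}
Hence, whenever $\|S\|>C\,n^{1.5}\sqrt{\log n}\,\lambda_n$, we obtain $\|{\bf v}\|\le\|S\|/2$.

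\emph{Updating $S$.} We must also ensure that ${\bf v}$ is nonzero and lies outside the span of the shorter vectors of $S$, so that it can replace ${\bf s}_n$ while keeping $S$ independent. For this we use the standard argument: conditioned on the ${\bf c}_j$ and hence on $A$ and ${\bf x}$, each ${\bf y}_j$ retains min-entropy, because the coset Gaussian is spread out when $s\ge\eta_\epsilon$. Since ${\bf x}\neq{\bf 0}$, the vector ${\bf v}$ falls in any fixed hyperplane with probability at most about $1/2$, and repetition boosts this to high probability. If the oracle succeeds only with non-negligible probability, we repeat polynomially many times.

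\emph{Main obstacle.} I expect the main difficulty to be the simultaneous choice of parameters. We need $Q$ large enough to kill the rounding error, while $\beta$ (via Siegel) stays $O(1)$ so that $\|{\bf x}\|_2=O(\sqrt m)=O(n)$; this is exactly what yields the $n^{1.5}$ factor. We also need $s$ to be comparable to $\eta_\epsilon(\mathcal{L})$ without knowing $\lambda_n$. I would handle the latter by trying $s=2^{-i}\|S\|$ for all polynomially many $i$ and keeping the shortest valid output.
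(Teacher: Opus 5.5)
\textbf{Gap in step 1.} As written, your step 1 fails after the first update, and the failure carries into the ``Updating $S$'' argument. Once you replace ${\bf s}_n$ by ${\bf v}$, $S$ is only a set of independent lattice vectors, and $\mathcal{L}(S)$ is in general a proper sublattice of $\mathcal{L}$, possibly of huge index. A Gaussian of width $s\approx\eta_\epsilon(\mathcal{L})$ reduced modulo $P(S)$ is close to uniform on $P(S)$ only if $s\ge\eta_\epsilon(\mathcal{L}(S))$, and that quantity can be of order $\|S\|$.

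Take $\mathcal{L}=\mathbb{Z}^n$ and $S=2^kI_n$ with $2^k\gg n^{1.5}$. The points ${\bf c}_j$ then cluster at the vertices of $P(S)$, so $A$ has entries concentrated at $0$ and $Q-1$. It is not a valid instance for the oracle. The same problem hits your independence argument: conditioned on ${\bf c}_j$, ${\bf y}_j$ is a Gaussian on a coset of $\mathcal{L}(S)$, not of $\mathcal{L}$. With $s\ll\lambda_1(\mathcal{L}(S))$ it is essentially deterministic, so there is no min-entropy to keep ${\bf v}$ off a hyperplane. Raising $s$ to $\eta_\epsilon(\mathcal{L}(S))$ does not help, since in this example the Gaussian term becomes of order $n^{1.5}\|S\|$ and no progress is made.

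There are two standard repairs:
\begin{itemize}
\item Add to each ${\bf y}_j$ an independent uniformly random representative of $\mathcal{L}/\mathcal{L}(S)$ before reducing modulo $P(S)$, as Micciancio and Regev do. Then only $s\ge\eta_\epsilon(\mathcal{L})$ is needed, ${\bf y}_j-{\bf c}_j$ is still in $\mathcal{L}$, and conditionally ${\bf y}_j\sim D_{\mathcal{L}+{\bf c}_j,s}$.
\item At each step, convert $S$ into a basis $T$ of $\mathcal{L}$ with $\|T\|\le\sqrt n\,\|S\|$ (Micciancio--Goldwasser) and work modulo $P(T)$, absorbing the extra factor into $Q$.
\end{itemize}
A smaller point: in general $\|S\boldsymbol{\delta}_j\|\le n\|S\|/Q$, not $\sqrt n\|S\|/Q$. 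This only changes $Q$ by a polynomial factor and keeps $\beta=O(1)$.

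\textbf{Comparison with the paper.} Once repaired, your route differs genuinely from the paper's. The paper does not iterate. It keeps the input basis $B\in U_M^{n\times n}$ and takes $Q=\lceil n\sqrt{m}M\rceil$, so the rounding contribution is at most $\beta\sqrt{nm}$ in absolute terms. It absorbs this into $\beta\sqrt{nm}\,\lambda_n(\mathcal{L})$ using only $\lambda_n\ge 1$ for integer lattices. It then collects $n$ independent outputs, each already of length $\tilde O(n^{1.5})\lambda_n$, and guesses $\tilde\eta$ among $\tilde O(n)$ dyadic values obtained from an LLL bound. That is shorter and needs no sublattice bookkeeping, but it uses integrality essentially. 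Also, because its $Q$ grows with $M$, its Siegel bound $(2n^2Q)^{1/n}$ is $O(1)$ only when $\log M=O(n)$; the paper's remark simply assumes $Q=\mathrm{poly}(n)$.

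Your halving scheme keeps $Q=\mathrm{poly}(n)$ independent of the input basis. So $\beta=O(1)$ holds unconditionally and the argument works for arbitrary lattices, at the cost of iteration and the coset or basis-conversion step above. Your choice to condition on the ${\bf c}_j$ before bounding $\|\sum_j x_j{\bf y}_j\|$ is also the right way to handle the dependence of ${\bf x}$ on the samples. The paper's direct use of the fixed-coefficient Gaussian tail bound glosses over this dependence.
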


The proof cannot be obtained by a black-box reuse of the standard ${\rm{SIS}}_q$ reduction (Subsection \ref{no-black-box-reuse}). The reduction follows the well-known paradigm used for the standard SIS problem but introduces several important modifications. For instance, we employ Siegel’s Lemma to bound integer solutions of linear systems.  Unlike the modular setting, where the condition ``$\bmod q$'' naturally constrains the entries of the matrix $A$ (the input of the ${\rm{SIS}}_q$ oracle) to be within $[0, q)$, in our case we must explicitly define an integer $Q$ that bounds $A$'s entries. Moreover, our approximation factor is $\tilde{O}(n^{1.5})$ instead of $\tilde{O}(n)$ as in  ${\rm{SIS}}_q$.

Overall, the paper develops the hardness theory of this integer SIS variant \footnote{For ${\rm LWE}_q$ the situation is different. As shown in~\cite{espitau}, the corresponding non-modular variant can be solved efficiently, provided that the noise parameters satisfy certain plausible conditions.}.

\subsection{Related Work}

In Ajtai's landmark paper \cite{ajtai96}, a family of one-way hash functions was introduced, with security rooted in the Shortest Independent Vector Problem (SIVP). Shortly thereafter, in \cite{oded}, the authors extended Ajtai's construction to define a collision-resistant hash function. Essentially, this function adapts Ajtai's original construction, restricting it to a more specific definition set to achieve collision resistance. 

Previous constructions suffer from inefficiency due to the need for handling and storing a large matrix. Addressing this, Micciancio introduced an alternative function in \cite{Mic-focs2002} of the form 
\[
f_{\mathbf{a}}(\mathbf{x}) = \mathbf{a} \cdot \mathbf{x} \pmod{q},
\] 
with  convolution product operation (as opposed to componentwise multiplication in Ajtai's construction).
This problem is computationally hard to invert when restricting the solution set to \( \{0,1,\ldots,\lfloor p^{\delta} \rfloor\}^n \). He demonstrated that solving random instances of this problem with non-negligible probability is as hard as approximating the Shortest Vector Problem (SVP) for \(\tilde{O}(n)\)-cyclic lattices in the worst case, with a complexity nearly linear in the lattice dimension. Later,  Micciancio and Lyubashevsky~\cite{Mic07} advanced the previous approach by developing efficient, collision-resistant hash functions with security grounded in standard lattice problems, specifically tailored for ideal lattices.

Furthermore, other cryptographic primitives, e.g., public key cryptosystems, ID schemes, digital signatures etc., have also emerged based on lattice problems. One year after Ajtai's work, Ajtai and Dwork~\cite{ajtai-dwork} presented the first public key cryptosystem with provable security, relying on a worst-case lattice problem. Shortly thereafter, the GGH cryptosystem \cite{GGH} was developed. Although more efficient, GGH lacked a worst to average-case reduction.  The security of GGH relies on the difficulty of solving CVP using a bad basis of the lattice. This bad basis serves as the public key, requiring $O(n^2)$ space for the size of the public key. For small dimensions (e.g., less than 100), GGH is not secure since the LLL algorithm can break it. For moderate dimensions, Nguyen~\cite{broken-GGH} provided an efficient attack. Therefore, to ensure security we need $n \approx 500$, which renders GGH impractical. In parallel with GGH, the NTRU cryptosystem was introduced~\cite{NTRU}, addressing the issue of large key sizes in the high-dimensional space required by GGH, by using $\tilde{O}(n)$ bits instead of ${O}(n^2)$ for the size of the public key. NTRU has been standardized by IEEE, X9.98, and PQCRYPTO, and was a finalist in the NIST post-quantum cryptography standardization effort.

Also we have other cryptographic schemes based on hard problems on lattices. In 2008 \cite{Vadim}, an identification scheme based on the SIS problem over ${\mathbb{Z}}_q$  was introduced. 

Finally, an alternative approach was used for the identification scheme based on the compact knapsack problem\footnote{i.e. find small integer solutions to the linear equation $a_1x_1+\cdots+a_nx_n=b$.} in \cite{papadopoulou} and later generalized to include also digital signatures in \cite{rizos}. However, no worst-case to average-case reduction was provided for these schemes. 

\subsection{Roadmap}
Section~\ref{sec:lattices} introduces the notation and recalls the lattice background needed throughout the paper. In Section~\ref{no-black-box-reuse} we explain why the standard worst-to-average reduction for $\mathrm{SIS}_q$ cannot be reused as a black box in the non-modular setting. Section \ref{sec:basic-lemmata} collects the lemmas needed for our worst-to-average-case reduction, including standard results on the smoothing parameter, largely summarizing established findings from the literature. Additionally, we cover fundamental aspects of Siegel’s lemma for linear systems. In Section \ref{Sec:reduction} we provide the worst-to-average-case reduction. Finally, in Section \ref{sec:conclusion} we provide a conclusion. The paper also includes two appendices.

\section{Preliminaries}\label{sec:lattices}
In this section, we recall some well-known facts about lattices and provide some definitions.
\subsection{Notation}
In this paper, matrices are denoted by capital letters $A, B, C$; the matrix $B$ typically denotes the basis matrix of a lattice, while $A$ typically refers to the coefficient matrix of a homogeneous system $A\mathbf{x} = \mathbf{0}$. Vectors are written in bold. We denote by $\|\cdot\|$ the Euclidean $\ell_2$ norm. 

We also employ standard asymptotic notation: Given two functions $f, g \colon \mathbb{N} \to \mathbb{R}_{>0}$, we write $f(n) = O(g(n))$ if there exist constants $c > 0$ and $n_0 \in \mathbb{N}$ such that $f(n) \leq c\, g(n)$ for all $n \geq n_0.$ We write $f(n) = \Theta(g(n))$ if $f(n) = O(g(n))$ and $g(n) = O(f(n)).$ We write $f=\Omega(g),$ if $f$ is bounded below from $g$ up to a constant factor i.e. $f(n)\ge k g(n)$ for some $k>0, n_0\in {\mathbb{N}}$ and for all $n\geq n_0.$ Similarly, we write $f(n) = \widetilde{O}(g(n)),$ if there exists a constant $k > 0$ such that, 
$f(n) = O(g(n) \log^k g(n))$, that is, $\widetilde{O}(\cdot)$ suppresses polylogarithmic factors. Similarly for $\tilde{\Theta}$ and $\tilde{\Omega}.$

A function \( f: \mathbb{N} \to \mathbb{R} \) is negligible if for all constants \( c > 0 \), there exists an integer \( N_c \) such that for all \( n \geq N_c \),
\[
|f(n)| < \frac{1}{n^c}.
\]
Also, we need the definition of the statistical distance.
\begin{definition}\label{def:statistical_distance}
    If \( X \) and \( Y \) are discrete random variables over a countable set \( A \), then the statistical distance between \( X \) and \( Y \), denoted \( \Delta(X, Y) \), is defined as
\[
\Delta(X, Y) = \frac{1}{2} \sum_{a \in A} \big| \Pr[X = a] - \Pr[Y = a] \big|.
\]
If \( X \) and \( Y \) are continuous random variables over ${\mathbb{R}}^n$, with density functions $D_X,D_Y$ (resp.), then the statistical distance between \( X \) and \( Y \), again denoted \( \Delta(X, Y) \), is defined as
\[
\Delta(X, Y) = \frac{1}{2} \int_{\mathbb{R}^n} |D_X(t)-D_Y(t)| \ dt.
\]
\end{definition}

\subsection{Lattices}\label{sec:prelim}
Let ${{\bf{b}}_1,{\bf{b}}_2,\ldots,{\bf{b}}_n}$ be linearly independent vectors of ${\mathbb{R}}^{m}$.
	The set 
	\[\mathcal{L} = \bigg{\{} \sum_{j=1}^{n}\alpha_j{\bf{b}}_j :
	\alpha_j\in\mathbb{Z}, 1\leq j\leq n\bigg{\}}\]
	is called {\em lattice} and 
	the finite vector set $\mathcal{B} = \{{\bf{b}}_1,\ldots,{\bf{b}}_n\}$ is called basis of 
	the lattice $\mathcal{L}$. 
	All the bases of $\mathcal{L}$ have the same number of elements, i.e. in our case $n,$ which is called
	{\em dimension} or {\em rank} of $\mathcal{L}$. If $n=m$, then	the lattice $\mathcal{L}$ is said to have {\em full rank}. 
	We let $B$ be the $m\times n$ matrix, having as columns the vectors 
	${\bf{b}}_1,\ldots,{\bf{b}}_n$. 
	If $\mathcal{L}$ has full rank, then the {\em volume} of the lattice
	$\mathcal{L}$ is defined to be the positive number
	$|\det{B}|.$  The volume, as well as the rank, are independent of the basis $\mathcal{B}$. The volume is denoted 
	by $vol(\mathcal{L})$ or $\det{\mathcal{L}}.$ 
	Let now ${\bf v}\in \mathbb{R}^m$, then $\|{\bf v}\|$ denotes the Euclidean norm (or $
    \ell_2$ norm) of ${\bf v}$.  Additionally,  with  $\lambda_1(\mathcal{L})$ we write the least of the lengths of vectors in 	$ \mathcal{L}-\{ {\bf 0} \}$. If ${\bf t}\in {\rm{span}}({\bf b}_1,...,{\bf b}_n)$, then by $dist(\mathcal{L},{\bf t}),$ we mean $\min\{\|{\bf v}-{\bf t}\|: {\bf v}\in \mathcal{L} \}$. Finally, $P(B)=P({\mathcal{B}})=\{B{\bf x}:0 \le x_i < 1\ \text{for} \ i=1,2,...,n\}$ is the fundamental parallelepiped of ${\mathcal{B}}.$

Let $B_{m}({\bf r}_0,\beta)=\{{\bf x}\in{\mathbb{R}}^m : \|{\bf x}-{\bf r}_0\|<\beta\}$ the open ball of ${\mathbb{R}}^m$ with center ${\bf r}_0$ and radius $\beta.$  We denote the set $B_{m}({\bf 0},\beta)$ as  $B_{m}(\beta).$  The closed ball is ${\overline{B}}_{m}({\bf r}_0,\beta).$ We now provide the definition of the $i$th successive minimum.
\begin{definition}($i$-th successive minimum). Let ${\mathcal{L}}$ be a lattice of rank $n,$ then we define
\[
\lambda_i ({\mathcal{L}}) = \min \left\{ r : \dim \left( \operatorname{span} \left( {\mathcal{L}} \cap \overline{B}_n(r) \right) \right) \geq i \right\},\ 1\le i\le n.
\]
\end{definition}

We continue with the definition of SIVP.
\begin{definition}($SIVP_{\gamma}$). Let $\gamma=\gamma(n).$ Given a basis of ${\mathcal{L}}$ find a set of $n$ linearly independent lattice vectors $S = \{ \mathbf{s}_1,...,{\bf s}_n \} \subseteq {\mathcal{L}},$ such that  
\[\max_{1\le i\le n} \|\mathbf{s}_i\| \leq \gamma(n) \lambda_n({\mathcal{L}})\ \ (\gamma(n)\ge 1).\]
\end{definition}

We recall the definition of LLL reduction of a lattice basis. The LLL algorithm produces a reduced basis consisting of shorter vectors that are more orthogonal. For more details about LLL and general for background on lattices see \cite{galbraith}.

\begin{definition}(LLL)
A basis $\mathcal{B} = \{{\bf{b}}_1,\ldots,{\bf{b}}_n\}$ of a lattice $\mathcal{L}$ is called LLL-reduced if it satisfies the following conditions:
\\
\texttt{1.} $|\mu_{i,j}| = \frac{|{\bf b}_i \cdot {\bf b}^*_j|}{\|{\bf b}^*_j\|^2} \le \frac{1}{2} $
for every $i,j$ with $1\leq j < i \leq n$ (${\bf b}_j^*$ is the $j$-th Gram-Schmidt vector),
\\
\texttt{2.} $\|{\bf b}^*_i\|^2  \geq (\frac{3}{4}- \mu_{i,i-1}^2)\|{\bf b}^*_{i-1}\|^2$ for every $i$   with $1 < i \leq n$. 
\end{definition}

\begin{lemma}\label{Lemma:lll-vectors}
Let $\mathcal{B}=\{{\bf b}_1,...,{\bf b}_n\}$ be an LLL reduced basis with $\delta=3/4$ for a lattice ${\mathcal{L}} \subset \mathbb{R}^m$. Then, 
$$ \| {\bf b}_j\|\le 2^{(n-1)/2}\lambda_i({\mathcal{L}}), \text{ for } 1 \leq j \leq i \leq n.$$ 
\end{lemma}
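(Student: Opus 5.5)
The plan is the classical LLL argument via Gram--Schmidt vectors, in two parts: a bound on each $\|{\bf b}_j\|$ in terms of the later Gram--Schmidt norms, and a comparison of those norms with $\lambda_i(\mathcal{L})$.

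\emph{Growth of the Gram--Schmidt norms.} Conditions 1 and 2 of the LLL definition give $\|{\bf b}^*_i\|^2\ge (3/4-1/4)\|{\bf b}^*_{i-1}\|^2=\tfrac12\|{\bf b}^*_{i-1}\|^2$. Iterating, $\|{\bf b}^*_j\|^2\le 2^{k-j}\|{\bf b}^*_k\|^2$ for all $j\le k$.

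\emph{Bounding $\|{\bf b}_j\|$.} Write ${\bf b}_j={\bf b}^*_j+\sum_{l<j}\mu_{j,l}{\bf b}^*_l$. Orthogonality gives
\[
\|{\bf b}_j\|^2=\|{\bf b}^*_j\|^2+\sum_{l<j}\mu_{j,l}^2\|{\bf b}^*_l\|^2\le \|{\bf b}^*_j\|^2+\tfrac14\sum_{l<j}2^{j-l}\|{\bf b}^*_j\|^2 .
\]
The geometric sum is at most $2^{j}-2$, so
\[
\|{\bf b}_j\|^2\le\Bigl(1+\tfrac{2^j-2}{4}\Bigr)\|{\bf b}^*_j\|^2\le 2^{j-1}\|{\bf b}^*_j\|^2 .
\]
Combining with the first step, for $j\le k$,
\[
\|{\bf b}_j\|^2\le 2^{j-1}\,2^{k-j}\|{\bf b}^*_k\|^2=2^{k-1}\|{\bf b}^*_k\|^2\le 2^{n-1}\|{\bf b}^*_k\|^2 .
\]

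\emph{Comparison with $\lambda_i$ (main step).} I claim that for each $i$ there is an index $k\ge i$ with $\|{\bf b}^*_k\|\le\lambda_i(\mathcal{L})$. Let ${\bf v}_1,\dots,{\bf v}_i$ be linearly independent lattice vectors with $\|{\bf v}_t\|\le\lambda_i$. They cannot all lie in ${\rm span}({\bf b}_1,\dots,{\bf b}_{i-1})$, which has dimension $i-1$, so some ${\bf v}=\sum_{l}c_l{\bf b}_l$ with $c_l\in\mathbb{Z}$ has a nonzero coefficient at some index $\ge i$. Let $k$ be the largest index with $c_k\neq0$; then $k\ge i$. The component of ${\bf v}$ along ${\bf b}^*_k$ is $c_k{\bf b}^*_k$, hence
\[
\lambda_i\ge\|{\bf v}\|\ge |c_k|\,\|{\bf b}^*_k\|\ge\|{\bf b}^*_k\| .
\]

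\emph{Conclusion.} For $j\le i\le k$ the bound from the second step applies, giving $\|{\bf b}_j\|^2\le 2^{n-1}\|{\bf b}^*_k\|^2\le 2^{n-1}\lambda_i^2$. Taking square roots yields $\|{\bf b}_j\|\le 2^{(n-1)/2}\lambda_i(\mathcal{L})$, as stated. The only step needing care is the comparison with $\lambda_i$ (choosing ${\bf v}$ outside the first $i-1$ basis vectors' span and using the largest nonzero index); the rest is routine bookkeeping of the geometric sums.
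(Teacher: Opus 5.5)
Your proof is correct and complete. It is the standard Lenstra--Lenstra--Lov\'asz argument: the Gram--Schmidt growth $\|{\bf b}^*_j\|^2\le 2^{k-j}\|{\bf b}^*_k\|^2$, the bound $\|{\bf b}_j\|^2\le 2^{j-1}\|{\bf b}^*_j\|^2$, and the ``largest nonzero coefficient'' comparison with $\lambda_i$. This is exactly what the paper relies on, since it gives no proof of its own and only cites \cite[Theorem 17.2.12]{galbraith}. Your step of picking one short vector outside ${\rm span}({\bf b}_1,\dots,{\bf b}_{i-1})$ is a slightly streamlined form of the usual step that orders all $i$ independent vectors by their largest index, and it is valid.
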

\begin{proof}
See \cite[Theorem 17.2.12]{galbraith}.
\end{proof}

\subsection{Definition of ${\rm{SIS}}_{\mathbb{Z}}$} We now provide the definition of $SIS$ over the integers.
\begin{definition}
The \emph{$\ell_{\infty}$-SIS} problem over the integers, denoted 
$\ell_{\infty}\text{-}\mathrm{SIS}_{\mathbb{Z}}(n,m,\beta,{\mathcal{S}})$ with $m>n$, ${\mathcal{S}}\subset {\mathbb{Z}},$
is defined as follows.  
Let $A = (a_{ij}) \in \mathbb{Z}^{n \times m}$ be a random integer matrix whose entries $a_{ij} \in {\mathcal{S}}$ (${\mathcal{S}}$ is finite),  
and let $\beta > 0$ be given.  
The goal is to find a nonzero vector ${\bf z} \in \mathbb{Z}^m \setminus \{{\bf 0}\}$ such that,  
\[
A{\bf z} = {\bf 0}
\quad \text{and} \quad
\|{\bf z}\|_{\infty} \le \beta.
\]
\end{definition}
Finally, let $M$ be a positive real number, with $U_M$ we denote the set $\{  x\in {\mathbb{Z}}:|x|\le M\}=[-M,M]\cap {\mathbb{Z}}.$ We also use the non-symmetric set ${\mathcal{C}}_Q=\{0,1,...,Q-1\}$ for some positive integer $Q.$
\subsection{$SIS_{\mathbb{Z}}$ and $SIS_q$}\label{no-black-box-reuse}
\noindent 
In this subsection we explain why the standard worst-to-average reduction for
$\mathrm{SIS}_q$ cannot be reused as a black box for $\mathrm{SIS}_{\mathbb Z}$.
\medskip
\noindent
The implication
\[
\mathrm{SIS}_{\mathbb Z}\Longrightarrow \mathrm{SIS}_q
\]
is immediate: if $A{\bf z}={\bf 0}$ over $\mathbb Z$, then
$A{\bf z}\equiv {\bf 0}\pmod q$.

 The converse, however, is not straightforward.  In order to transfer a modular solution
to an exact integer solution, one needs a \emph{lifting property}. At the same time,
to invoke an $\mathrm{SIS}_q$ oracle in the usual way, one needs a \emph{uniformity
property} modulo $q$. We show that these two requirements are incompatible in our
parameter regime.

\begin{definition}[Lifting property $(LP)$]
Let $A\in {\mathcal{C}}_Q^{n\times m}$ and let $\beta,q\in\mathbb Z_{>0}$. We say that $(LP)$ holds
(for $(A,\beta,q)$) if every vector ${\bf z}\in\mathbb Z^m$ with
$\|{\bf z}\|_\infty\le \beta$ satisfying
\[
A{\bf z}\equiv {\bf 0}\pmod q
\]
also satisfies
\[
A{\bf z}={\bf 0}\qquad\text{over }\mathbb Z.
\]
\end{definition}

\begin{lemma}[Sufficient condition for $(LP)$]\label{lem:LP-sufficient}
Let $A=(a_{ij})\in {\mathcal{C}}_Q^{n\times m}$ and ${\bf z}\in\mathbb Z^m$ with
$\|{\bf z}\|_\infty\le \beta$. If
\[
q>m(Q-1)\beta,
\]
then
\[
A{\bf z}\equiv{\bf 0}\pmod q \quad\Longrightarrow\quad A{\bf z}={\bf 0}.
\]
Hence $(LP)$ holds whenever $q>m(Q-1)\beta$.
\end{lemma}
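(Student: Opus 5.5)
The plan is a coordinatewise size bound: each entry of $A{\bf z}$ is an integer of absolute value strictly less than $q$, and the only such multiple of $q$ is $0$. Fix a row index $i\in\{1,\dots,n\}$ and consider the $i$-th coordinate $(A{\bf z})_i=\sum_{j=1}^m a_{ij}z_j$. Since $A\in{\mathcal{C}}_Q^{n\times m}$, every entry satisfies $0\le a_{ij}\le Q-1$. The hypothesis $\|{\bf z}\|_\infty\le\beta$ gives $|z_j|\le\beta$ for all $j$. The triangle inequality then yields
\[
\Bigl|\sum_{j=1}^m a_{ij}z_j\Bigr|\le\sum_{j=1}^m |a_{ij}|\,|z_j|\le m(Q-1)\beta<q .
\]

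Next use the congruence $A{\bf z}\equiv{\bf 0}\pmod q$. It says that the integer $(A{\bf z})_i$ is a multiple of $q$, so $(A{\bf z})_i=kq$ for some $k\in\mathbb{Z}$. If $k\neq 0$, then $|(A{\bf z})_i|\ge q$, contradicting the bound above. Hence $k=0$ and $(A{\bf z})_i=0$. Since $i$ was arbitrary, $A{\bf z}={\bf 0}$ over $\mathbb{Z}$.

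This argument holds for every ${\bf z}\in\mathbb{Z}^m$ with $\|{\bf z}\|_\infty\le\beta$, so by the definition of the lifting property, $(LP)$ holds for $(A,\beta,q)$ whenever $q>m(Q-1)\beta$. Nothing here is a real obstacle. The only point to check is that the nonnegativity of the entries of ${\mathcal{C}}_Q$ lets us bound $|a_{ij}|$ by $Q-1$ rather than by $Q$, and the strict inequality in the hypothesis is exactly what excludes $|(A{\bf z})_i|=q$.
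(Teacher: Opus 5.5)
Your proof is correct and is essentially the paper's argument: you bound each coordinate $|(A{\bf z})_i|\le\sum_j a_{ij}|z_j|\le m(Q-1)\beta<q$ using the nonnegativity of the entries of ${\mathcal{C}}_Q$. You then note that a multiple of $q$ with absolute value below $q$ must be $0$.
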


\begin{proof}
Write $A{\bf z}=(y_1,\dots,y_n)^T$. For each $i\in \{1,2,...,n \}$,
\[
|y_i|
=\Big|\sum_{j=1}^m a_{ij}z_j\Big|
\le \sum_{j=1}^m a_{ij}\,|z_j|
\le m(Q-1)\beta.
\]
If $A{\bf z}\equiv {\bf 0}\pmod q$ and $q>m(Q-1)\beta$, then each $y_i$ is a multiple of $q$
with $|y_i|<q$, hence $y_i=0$. Therefore $A{\bf z}={\bf 0}$.
\end{proof}
Now we provide the definition for the other property concerning the uniformity ${\rm{mod}} q$ of a matrix with coefficients in ${\mathcal{C}}_Q.$ 
\begin{definition}[Uniformity property $(UP)_\varepsilon$]
Let $A\xleftarrow{\$}{\mathcal{C}}_Q^{n\times m}$ be entrywise uniform and let $q\ge 2$.
For $\varepsilon>0$, we say that $(UP)_\varepsilon$ holds if the statistical distance (see definition \ref{def:statistical_distance})
\[
\Delta\!\bigl(A \bmod q,\ \mathcal U(\mathbb Z_q^{n\times m})\bigr)\le \varepsilon.
\]
\end{definition}
We first need the following auxiliary result. 
\begin{lemma}[Uniformity modulo $q$ for bounded integers]\label{lem:distribution-modq-1d}
Let $X\xleftarrow{\$}{\mathcal{C}}_Q$ and let $Y=X\bmod q\in\mathbb Z_q$, where $q\ge 2$.
\begin{enumerate}
\item $Y$ is uniform on $\mathbb Z_q$ if and only if $q\mid Q$.
\item If $q\nmid Q$, then
\[
0<\Delta\!\bigl(Y,\mathcal U(\mathbb Z_q)\bigr)
\le \frac{q}{4Q}.
\]
\end{enumerate}
\end{lemma}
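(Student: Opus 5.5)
The plan is to compute the distribution of $Y$ explicitly. Write $Q=kq+r$ with $k=\lfloor Q/q\rfloor$ and $0\le r<q$. Each residue $y\in\{0,1,\dots,r-1\}$ is hit by exactly $k+1$ elements of ${\mathcal{C}}_Q=\{0,\dots,Q-1\}$, and each residue $y\in\{r,\dots,q-1\}$ by exactly $k$ elements. Hence
\[
\Pr[Y=y]=\frac{k+1}{Q}\ (0\le y<r),\qquad \Pr[Y=y]=\frac{k}{Q}\ (r\le y<q).
\]
Everything else follows from this formula.

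For part (1): if $q\mid Q$ then $r=0$, so every residue has probability $k/Q=1/q$, and $Y$ is uniform. Conversely, if $Y$ is uniform then $\Pr[Y=0]=1/q$. Since $\Pr[Y=0]=c/Q$ for an integer count $c$, we get $Q=cq$, so $q\mid Q$. Alternatively, if $r\neq 0$ the two probability values $(k+1)/Q$ and $k/Q$ both occur, which rules out uniformity. This also gives the strict positivity $\Delta>0$ in part (2).

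For part (2): assume $0<r<q$ and use $1/q=(k+r/q)/Q$. The deviations from uniform are
\[
\frac{k+1}{Q}-\frac1q=\frac{1-r/q}{Q}\ \ (r\text{ residues}),\qquad \frac1q-\frac kQ=\frac{r/q}{Q}\ \ (q-r\text{ residues}).
\]
Therefore
\[
\Delta=\frac12\left[\frac{r(1-r/q)}{Q}+\frac{(q-r)r/q}{Q}\right]=\frac{r(q-r)}{qQ}.
\]
By AM–GM, $r(q-r)\le q^2/4$, which gives $\Delta\le \frac{q}{4Q}$.

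No step is a real obstacle. The only care needed is the exact counting of preimages and keeping the factor $\tfrac12$ in the definition of statistical distance; the final inequality is simply maximizing the quadratic $r(q-r)$.
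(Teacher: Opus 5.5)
Your proof is correct and matches the paper's argument: the same division $Q=qt+r$, the same preimage counts giving probabilities $(t+1)/Q$ and $t/Q$, the same closed form $\Delta=r(q-r)/(qQ)$, and the same bound $r(q-r)\le q^2/4$. You derive part (1) from the explicit distribution, which makes the ``if'' direction explicit; the paper's chain of equivalences leaves that direction implicit.
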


\begin{proof}
For a proof of this result see Appendix \ref{sisz_sisq}.
\end{proof}

\begin{corollary}[Matrix version]\label{cor:distribution-modq-matrix}
Let $A\xleftarrow{\$}{\mathcal{C}}_Q^{n\times m}$ be entrywise uniform and let $q\ge 2$.
If $q\nmid Q$, then
\[
0<\Delta\!\bigl(A\bmod q,\ \mathcal U(\mathbb Z_q^{n\times m})\bigr)
\le \min\!\left\{1,\frac{nm\,q}{4Q}\right\}.
\]
\end{corollary}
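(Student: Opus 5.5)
The plan is to derive the matrix bound from the one-dimensional Lemma~\ref{lem:distribution-modq-1d} by exploiting independence of the entries. The one tool needed is subadditivity of statistical distance under products: if $X_1,\dots,X_k$ are independent, $Y_1,\dots,Y_k$ are independent, and all take values in countable sets, then
\[
\Delta\bigl((X_1,\dots,X_k),(Y_1,\dots,Y_k)\bigr)\le \sum_{i=1}^k \Delta(X_i,Y_i).
\]
I would prove this with a short hybrid argument. Define the hybrid $H_j=(Y_1,\dots,Y_j,X_{j+1},\dots,X_k)$, with $H_0$ the $X$-tuple and $H_k$ the $Y$-tuple. Consecutive hybrids $H_{j-1}$ and $H_j$ differ only in coordinate $j$, and the other coordinates have the same independent marginals in both. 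Factoring the joint probability mass functions in Definition~\ref{def:statistical_distance} and summing out those common coordinates gives $\Delta(H_{j-1},H_j)=\Delta(X_j,Y_j)$. The triangle inequality for $\Delta$ then yields the claim.

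Apply this with $k=nm$, where $X_{ij}=a_{ij}\bmod q$ and the $Y_{ij}$ are independent uniform elements of $\mathbb Z_q$. The entries $a_{ij}$ are independent and uniform on ${\mathcal C}_Q$, so the $X_{ij}$ are independent and each has the law of $Y$ in Lemma~\ref{lem:distribution-modq-1d}. The product of $nm$ independent uniforms on $\mathbb Z_q$ is exactly $\mathcal U(\mathbb Z_q^{n\times m})$. Part (2) of Lemma~\ref{lem:distribution-modq-1d} bounds each term by $q/(4Q)$, so the sum is at most $nmq/(4Q)$. Since statistical distance is always at most $1$ (because $\sum_a|p_a-p'_a|\le 2$), we may take the minimum with $1$.

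For strict positivity I would use projection. Projecting onto one coordinate can only decrease statistical distance, since the same pmf computation shows $\Delta(f(X),f(Y))\le\Delta(X,Y)$. Take $f$ to be the map sending $A\bmod q$ to its $(1,1)$ entry. Under $\mathcal U(\mathbb Z_q^{n\times m})$ this entry is uniform on $\mathbb Z_q$, and under $A\bmod q$ it is $Y$. Therefore
\[
\Delta\bigl(A\bmod q,\mathcal U(\mathbb Z_q^{n\times m})\bigr)\ge \Delta\bigl(Y,\mathcal U(\mathbb Z_q)\bigr)>0,
\]
where the last inequality is part (2) of Lemma~\ref{lem:distribution-modq-1d}, using $q\nmid Q$.

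None of these steps is a real obstacle. The only point that needs care is the hybrid step, namely checking the identity $\Delta(H_{j-1},H_j)=\Delta(X_j,Y_j)$ from the factorized joint distributions. The rest is direct citation of Lemma~\ref{lem:distribution-modq-1d}.
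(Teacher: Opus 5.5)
Your proposal is correct and follows the paper's proof: the paper likewise applies the one-dimensional Lemma~\ref{lem:distribution-modq-1d} entrywise and uses the product bound of Lemma~\ref{Lemma:properties_of_stat_dist}(i), which you re-derive by a hybrid argument instead of citing. Your projection argument for strict positivity (the data-processing inequality, Lemma~\ref{Lemma:properties_of_stat_dist}(ii)) and the trivial bound $\Delta\le 1$ fill in details that the paper leaves implicit in ``the result follows.''
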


\begin{proof}
Apply Lemma~\ref{lem:distribution-modq-1d} independently to each entry and use the product
bound for statistical distance (Lemma~\ref{Lemma:properties_of_stat_dist}(i)):
\[
\Delta\!\bigl(A\bmod q,\mathcal U(\mathbb Z_q^{n\times m})\bigr)
\le nm\cdot \Delta\!\bigl(Y,\mathcal U(\mathbb Z_q)\bigr)
\le \frac{nm\,q}{4Q}.
\]
The result follows.
\end{proof}
\begin{proposition}[Incompatibility of $(LP)$ and $(UP)_\varepsilon$]\label{prop:LP-UP-incompatible}
Assume $n,m>2$, $\beta\ge 1$, and $A\xleftarrow{\$}{\mathcal{C}}_Q^{n\times m}$.
If $(LP)$ is required for all vectors ${\bf z}$ with $\|{\bf z}\|_\infty\le \beta$, then a
sufficient condition is
\[
q>m(Q-1)\beta.
\]
On the other hand, if $(UP)_\varepsilon$ is required with small $\varepsilon$, then
Corollary~\ref{cor:distribution-modq-matrix} forces
\[
q \ll \frac{Q}{nm}\qquad\text{whenever } q\nmid Q.
\]
These two constraints are incompatible in the natural parameter regime.
In particular, one cannot simultaneously have both
\[
q>m(Q-1)\beta
\qquad\text{and}\qquad
\Delta\!\bigl(A\bmod q,\mathcal U(\mathbb Z_q^{n\times m})\bigr)\le \varepsilon
\]
with negligible (or even small constant) $\varepsilon$.
\end{proposition}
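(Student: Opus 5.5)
The first assertion is exactly Lemma~\ref{lem:LP-sufficient}, so I will simply invoke it. The heuristic constraint $q\ll Q/(nm)$ comes from reading Corollary~\ref{cor:distribution-modq-matrix}: the upper bound $nmq/(4Q)$ is only small when $q\ll Q/(nm)$. That corollary only gives an \emph{upper} bound on the distance, so it cannot by itself prove incompatibility. For the precise claim I will not rely on it. Instead I will show directly that in the regime $q>m(Q-1)\beta$ the distribution of $A\bmod q$ is far from uniform, by an exact computation.

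Assume $q>m(Q-1)\beta$. The case $Q=1$ is degenerate: then $A=0$, and $\Delta(A\bmod q,\mathcal U)=1-q^{-nm}$, which is not small. For $Q\ge 2$ we have $q>m(Q-1)\beta\ge m(Q-1)\ge Q$, using $m>2$ and $\beta\ge1$. Hence $q>Q$, so $q\nmid Q$, and reduction mod $q$ is injective on ${\mathcal{C}}_Q$. Therefore $A\bmod q$ is uniform on the set ${\mathcal{C}}_Q^{n\times m}\subset\mathbb Z_q^{n\times m}$, which has $Q^{nm}$ of the $q^{nm}$ points.

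For a distribution uniform on a subset $S$ of a finite set $T$, the statistical distance to uniform on $T$ equals $1-|S|/|T|$. I will check this directly from Definition~\ref{def:statistical_distance} by splitting the sum over $S$ and $T\setminus S$. This gives
\[
\Delta\!\bigl(A\bmod q,\mathcal U(\mathbb Z_q^{n\times m})\bigr)=1-\Bigl(\frac{Q}{q}\Bigr)^{nm}.
\]
Next I bound the ratio using $Q\le 2(Q-1)$ for $Q\ge2$:
\[
\frac{Q}{q}<\frac{Q}{m(Q-1)\beta}\le\frac{2}{m\beta}\le\frac{2}{3}.
\]
Hence the distance exceeds $1-(2/3)^{nm}\ge 1-(2/3)^{9}$, so it is bounded below by an absolute constant close to $1$. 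This rules out $(UP)_\varepsilon$ for any negligible or small constant $\varepsilon$.

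The main subtlety is interpretive rather than computational. $(LP)$ is only guaranteed by the sufficient condition $q>m(Q-1)\beta$, so the incompatibility should be stated, as in the proposition, relative to that regime. I would remark on this explicitly. Optionally, I would add that the $q\ll Q/(nm)$ heuristic from the Corollary and the condition $q>m(Q-1)\beta$ are contradictory, since together they would force $Q/(nm)\gg mQ\beta/2$.
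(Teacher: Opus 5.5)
Your argument is correct, and for the uniformity half it takes a different and more rigorous route than the paper.

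The paper splits into the cases $q\nmid Q$ and $q\mid Q$. In the first case it reads Corollary~\ref{cor:distribution-modq-matrix} as saying that a small distance \emph{requires} $q=O(Q/(nm))$, and combines this with $q>m(Q-1)\beta$ to reach $nm^2\beta\ll 1$. In the second case it observes that $q\mid Q$ forces $q\le Q$, which contradicts $q>m(Q-1)\beta\ge Q$.

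You use that last inequality from the outset. The lifting condition alone gives $q>Q$ for $Q\ge 2$, so reduction mod $q$ is injective on $\mathcal{C}_Q$. The distance is then exactly $1-(Q/q)^{nm}>1-(2/3)^{9}$, and you handle $Q=1$ separately.

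This gives an actual proof of the ``in particular'' claim, with an explicit constant. The paper's first case treats an upper bound as if it were a necessary condition. Moreover, in the regime $q>Q$ the bound $nmq/(4Q)$ exceeds $1$ and says nothing, so it cannot show the distance is large.

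Your caution about the constraint $q\ll Q/(nm)$ is also justified: it is only heuristic and fails literally. For $Q=2q+1$ the per-entry distance is $(q-1)/(Qq)<1/Q$. The matrix distance is therefore below $nm/Q$, which is small when $Q\gg nm$, even though $q\approx Q/2$.

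The paper's phrasing does convey the intuitive tension: lifting wants $q$ large, while uniformity wants $q$ small relative to $Q$. Your exact computation, however, makes both the case split and Lemma~\ref{lem:distribution-modq-1d} unnecessary.
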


\begin{proof}
The lifting condition follows from Lemma~\ref{lem:LP-sufficient}.

For the uniformity condition, we distinguish two cases.

If $q\nmid Q$, then Corollary~\ref{cor:distribution-modq-matrix} gives
\[
\Delta\!\bigl(A\bmod q,\mathcal U(\mathbb Z_q^{n\times m})\bigr)
\le \frac{nm\,q}{4Q}.
\]
Thus, in order for the statistical distance to be small, one must require
$q=O(Q/(nm))$. Combining this with $q>m(Q-1)\beta$ yields (up to constants)
\[
m(Q-1)\beta \ll \frac{Q}{nm},
\]
hence
\[
nm^2\beta \ll 1,
\]
which contradicts $n,m>2$ and $\beta\ge 1$.

If $q\mid Q$, then necessarily $q\le Q$. In this case exact uniformity modulo $q$ may hold,
but the lifting condition still requires
\[
q>m(Q-1)\beta.
\]
Since $m\ge 3$ and $\beta\ge 1$, we get
\[
m(Q-1)\beta \ge 3(Q-1).
\]
Hence, for $Q\ge 2$,
\[
q>m(Q-1)\beta \ge 3(Q-1)\ge Q,
\]
which contradicts $q\le Q$.
\end{proof}

\begin{remark}
Proposition~\ref{prop:LP-UP-incompatible} shows that one cannot directly reuse the standard
$\mathrm{SIS}_q$ reduction by simply replacing the $\mathrm{SIS}_q$ oracle with an
$\mathrm{SIS}_{\mathbb Z}$ oracle. This is because the modulus $q$ must be large
to guarantee lifting, but small relative to $Q$ to preserve
uniformity modulo $q$. Therefore, a new worst-to-average reduction is needed for
$\mathrm{SIS}_{\mathbb Z}$.
\end{remark}

\section{Some basic lemmata}\label{sec:basic-lemmata}
In this section, we present some basic results related to lattices and linear systems. These include the smoothing parameter, Gaussian distribution and the Siegel's constant. 

Consider a basis ${\mathcal{B}}$ of a lattice and $B$ the corresponding matrix. We divide each basis vector into $Q$ equal segments, i.e., we partition the fundamental parallelepiped $P(B)$ into $Q^n$ cells. We then consider an
arbitrary $\mathbf{y} \in P(B)$ and we construct another point $\mathbf{z}$ that belongs to the same cell as $\mathbf{y}.$ We are interested in finding an upper bound for $\| \mathbf{y} - \mathbf{z} \|,$ which will be needed in the reduction in Section \ref{Sec:reduction}. For this reason, we begin with two useful definitions and a Lemma that will facilitate this bound.

\begin{definition}

The Frobenius norm of a matrix \( A \in \mathbb{C}^{n \times m} \) is defined as:

\[
\|A\|_F = \sqrt{ \sum_{i=1}^{n} \sum_{j=1}^{m} |a_{ij}|^2 } = \sqrt{ \text{tr}(A^* A) },
\]
where $A^*$ denotes the conjugate transpose of $A.$
\end{definition}
\begin{definition}

The spectral norm of a matrix \( A \in \mathbb{C}^{n \times m} \) is defined as:

\[
\|A\|_2 = \sup_{\|{\bf x}\| = 1} \|A {\bf x}\| = \max_{{\bf x} \ne {\bf 0}} \frac{\|A {\bf x}\|}{\|{\bf x}\|}.
\]
\end{definition}

\begin{lemma}\label{Lemma:||AX||}
$\| A{\bf x}\| \le \|A\|_2\|{\bf x}\|\le \|A\|_F\| {\bf x} \|.$
\end{lemma}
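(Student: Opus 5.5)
The first inequality follows from the definition of the spectral norm; the second reduces to the Cauchy--Schwarz inequality applied row by row. For ${\bf x}={\bf 0}$ everything is zero, so I assume ${\bf x}\neq{\bf 0}$ throughout.

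\emph{First inequality.} By definition, $\|A\|_2=\max_{{\bf x}\ne{\bf 0}}\|A{\bf x}\|/\|{\bf x}\|$. Hence for every nonzero ${\bf x}$,
\[
\frac{\|A{\bf x}\|}{\|{\bf x}\|}\le \|A\|_2 ,
\]
and multiplying by $\|{\bf x}\|$ gives $\|A{\bf x}\|\le\|A\|_2\|{\bf x}\|$. The maximum exists because the unit sphere is compact and ${\bf x}\mapsto\|A{\bf x}\|$ is continuous. In any case, the supremum form is all that is needed here.

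\emph{Second inequality.} Let ${\bf a}_i$ denote the $i$-th row of $A$. The $i$-th coordinate of $A{\bf x}$ is $\sum_j a_{ij}x_j$, and by Cauchy--Schwarz
\[
\Big|\sum_{j=1}^m a_{ij}x_j\Big|^2\le \Big(\sum_{j=1}^m|a_{ij}|^2\Big)\|{\bf x}\|^2 .
\]
This holds over $\mathbb{C}$ as well, since $\big|\sum_j a_{ij}x_j\big|\le\sum_j|a_{ij}||x_j|$ and Cauchy--Schwarz then applies to the nonnegative reals. Summing over $i=1,\dots,n$ gives
\[
\|A{\bf x}\|^2\le\|A\|_F^2\|{\bf x}\|^2 .
\]
Taking square roots yields $\|A{\bf x}\|\le\|A\|_F\|{\bf x}\|$ for every ${\bf x}$. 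Dividing by $\|{\bf x}\|$ and taking the supremum over ${\bf x}\neq{\bf 0}$ gives $\|A\|_2\le\|A\|_F$, and therefore $\|A\|_2\|{\bf x}\|\le\|A\|_F\|{\bf x}\|$.

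There is no real obstacle here. The only point that needs care is handling the complex entries correctly, which is done by passing to absolute values before applying Cauchy--Schwarz, as above.
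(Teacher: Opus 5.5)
Your proof is correct and follows the paper's route. The first inequality is the defining property of the operator norm, which is what the paper's appeal to "submultiplicativity" really amounts to. The second inequality is $\|A\|_2\le\|A\|_F$: the paper simply cites this as a known fact, whereas you prove it by applying Cauchy--Schwarz row by row, including a valid treatment of complex entries.
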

\begin{proof}
It is straightforward since $\| \cdot \|_2$ is submultiplicative and $\|A\|_2\le \|A\|_F.$ 
\end{proof}
\begin{lemma}\label{lemma:delta}
Let $B\in {\mathbb{Z}}^{n\times n}$ be a lattice basis matrix and $Q\in {\mathbb{N}}.$ Also, let ${\bf y}=B{\bf Y}\in P(B)$, for some ${\bf Y}\in [0,1)^n$. Then,\\
$({\rm{i}}).$ If ${\bf a}=\lfloor QB^{-1}{\bf y}\rfloor,$ ${\bf t}=\frac{1}{Q}{\bf{a}},$ ${\boldsymbol{\delta}}={\bf Y}-{\bf t},$ then ${\boldsymbol{\delta}}\in [0,1/Q)^n.$ \\
$({\rm{ii}}).$  Let $B\in U_M^{n\times n}$ and ${\bf z}=B{\bf t}$. Then, 
    \[\|{\bf y}-{\bf z}\|\leq \frac{n\sqrt{n}M}{Q}.\]
\end{lemma}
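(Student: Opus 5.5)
For part (i) I will simply unwind the definitions. Since ${\bf y}=B{\bf Y}$ and $B$ is invertible (it is a basis matrix), we have $B^{-1}{\bf y}={\bf Y}$. Hence ${\bf a}=\lfloor Q{\bf Y}\rfloor$, taken componentwise. For each coordinate, the floor satisfies $\lfloor QY_i\rfloor\le QY_i<\lfloor QY_i\rfloor+1$. Dividing by $Q$ gives $t_i\le Y_i<t_i+1/Q$, so $\delta_i=Y_i-t_i\in[0,1/Q)$. The condition ${\bf Y}\in[0,1)^n$ also gives $a_i\in\{0,\dots,Q-1\}$. This is not needed for the inequality, but it confirms that ${\bf z}$ lies in the same cell of the partition of $P(B)$ as ${\bf y}$.

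For part (ii) I will write ${\bf y}-{\bf z}=B{\bf Y}-B{\bf t}=B\boldsymbol{\delta}$ and then apply Lemma~\ref{Lemma:||AX||}, which gives
\[
\|{\bf y}-{\bf z}\|\le \|B\|_F\,\|\boldsymbol{\delta}\|.
\]
It remains to bound the two factors. Every entry of $B$ lies in $U_M$, so $|b_{ij}|\le M$ and $\|B\|_F\le\sqrt{n^2M^2}=nM$. By part (i), each $|\delta_i|<1/Q$, so $\|\boldsymbol{\delta}\|<\sqrt{n}/Q$. Multiplying the two bounds yields $\|{\bf y}-{\bf z}\|\le n\sqrt{n}M/Q$, which is the claim.

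There is no real obstacle here. The only points needing care are noting that $B^{-1}{\bf y}={\bf Y}$ exactly, so the floor acts on ${\bf Y}$ itself, and using the Frobenius bound with the $n^2$ entries of $B$. If I wanted a sharper constant I could bound each coordinate of $B\boldsymbol{\delta}$ directly by $nM/Q$, which gives the same $n^{3/2}M/Q$. The Frobenius route via Lemma~\ref{Lemma:||AX||} matches the setup the paper has prepared.
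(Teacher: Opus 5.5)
Your proposal is correct and follows the paper's proof essentially step for step. Part (i) goes through $B^{-1}{\bf y}={\bf Y}$ and the componentwise floor inequality, and part (ii) writes ${\bf y}-{\bf z}=B\boldsymbol{\delta}$ and combines $\|B\|_F\le nM$ with $\|\boldsymbol{\delta}\|\le\sqrt{n}/Q$ via Lemma~\ref{Lemma:||AX||}, exactly as the paper does. Your aside about bounding each coordinate of $B\boldsymbol{\delta}$ by $nM/Q$ gives the same $n^{3/2}M/Q$, so it is an alternative route rather than a sharper constant.
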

\begin{proof}
$({\rm{i}}).$ Let ${\bf Y}=(Y_i)_i$, ${\bf a
}=(a_i)_i,$ and ${\bf t}=(t_i)_i.$ Then,
    $${\boldsymbol{\delta}}={\bf Y}-{\bf t}=(Y_1-t_1,...,Y_n-t_n)=
    \big(Y_1-\frac{1}{Q}a_1,...,Y_n-\frac{1}{Q}a_n\big).$$
    Since 
    $${\bf a}=\lfloor QB^{-1}{\bf y}\rfloor=\lfloor Q{\bf Y}\rfloor=(\lfloor QY_1\rfloor,...,\lfloor QY_n \rfloor),$$ 
    we get 
    $$\delta_i=Y_i-\frac{1}{Q}a_i=Y_i-\frac{1}{Q}\lfloor QY_i\rfloor\in [0,1/Q)\ (1\le i\le n).$$ 
    So $\boldsymbol{\delta}\in [0,1/Q)^n.$ We used that for every real $x$ we have $x-\frac{\lfloor Qx \rfloor}{Q}=\frac{\{Qx\}}{Q}\in [0,1/Q),$ where $\{Qx\}$ is the fractional part of $Qx.$\\
$({\rm{ii}}).$   \[\|{\bf y}-{\bf z}\| \leq \| B\boldsymbol{\delta}\|\le \| B\|_2 \| \boldsymbol{\delta}\|\le \|B\|_F\| {\bf \boldsymbol{\delta}}\|.\]
    But $\|B\|_F= \sqrt{ \sum_{i=1}^{n} \sum_{j=1}^{n} |b_{ij}|^2 }\le  \sqrt{ \sum_{i=1}^{n} \sum_{j=1}^{n} M^2}=\sqrt{n^2M^2}=nM $ and since $\delta\in [0,1/Q)^n$ we have  $\|\boldsymbol{\delta}\|\le \sqrt{n}/Q.$ So,
    \[\|{\bf y}-{\bf z}\| \le \frac{n\sqrt{n}M}{Q}.\]
\end{proof}

\subsection{Gaussian distribution}
Let $\rho_{\sigma,{\bf c}}({\bf x}) =e^{-\pi\|({\bf x}-{\bf c})/\sigma\|^2}$  $({\bf x}\in {\mathbb{R}}^n)$ be the Gaussian  function with center ${\bf c}$ scaled by factor $\sigma.$ We define the Gaussian distribution 
$D_{\sigma,{\bf c}}({\bf x})=\frac{\rho_{\sigma,{\bf c}}({\bf x})}{\sigma^n},$ since $\int_{{\mathbb{R}^n}}\rho_{\sigma,{\bf c}}({\bf x})\ d{\bf x}=\sigma^n.$
If ${\bf c}={\bf 0}$ we write $D_{\sigma}.$ 
Now, $D_{\sigma}={\mathcal{N}}({\bf 0},{\sigma'}^2I_n),$ with $\sigma' = \frac{\sigma}{\sqrt{2\pi}}.$ 
and the normal distribution ${\mathcal{N}}({\bf 0},{\sigma'}^2I_n),$ is defined as follows (we consider ${\bf c}={\bf 0}$), 
$$N_{\sigma'}({\bf x})=\frac{1}{{(2\pi\sigma'^2)}^{n/2}}\exp{\Big(-\frac{\|{\bf x}\|^2}{2\sigma'^2}\Big)}.$$

With $a<_p b$ ($a$ follows some distribution and $b$ is a constant) we mean that the inequality holds with probability $p.$

\begin{lemma}\label{Lemma:length_of_gaussian}
Let ${\bf x}\in {\mathbb{R}}^n$, such that ${\bf x}\xleftarrow{\$}D_{\sigma}.$ Then, $ \| \mathbf{x} \| \leq_p \sqrt{2n} \sigma$ for $p > 1 - 2^{-\Omega(n)}.$
\end{lemma}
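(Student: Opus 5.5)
The plan is to reduce the statement to a standard chi-square tail bound. Write $\sigma' = \sigma/\sqrt{2\pi}$, so that $D_\sigma = \mathcal{N}(\mathbf{0},\sigma'^2 I_n)$ as noted above. Then $\mathbf{x} = \sigma' \mathbf{g}$ with $\mathbf{g}\sim\mathcal{N}(\mathbf{0},I_n)$. The event $\|\mathbf{x}\|>\sqrt{2n}\,\sigma$ is the same as
\[
\|\mathbf{g}\|^2 > \frac{2n\sigma^2}{\sigma'^2} = 4\pi n .
\]
So it suffices to show $\Pr[\|\mathbf{g}\|^2>4\pi n]\le 2^{-\Omega(n)}$, where $\|\mathbf{g}\|^2=\sum_{i=1}^n g_i^2$ is a $\chi^2_n$ variable with mean $n$. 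The threshold is a constant factor ($4\pi>1$) above the mean, so an exponential moment (Chernoff) argument will do.

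For $0<t<1/2$, each coordinate has moment generating function $\mathbb{E}[e^{t g_i^2}]=(1-2t)^{-1/2}$. By independence and Markov's inequality,
\[
\Pr\Big[\sum_i g_i^2 > 4\pi n\Big]\le (1-2t)^{-n/2}e^{-4\pi t n}.
\]
Take $t=1/4$. The right-hand side becomes $\big(\sqrt{2}\,e^{-\pi}\big)^n$. Since $\sqrt{2}e^{-\pi}<1/2$, this is at most $2^{-n}$, which gives $p>1-2^{-\Omega(n)}$ (indeed $p\ge 1-2^{-n}$).

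An alternative is to work directly with the density $\rho_\sigma(\mathbf{x})/\sigma^n$. Using $\rho_\sigma(\mathbf{x})\le e^{-\pi s\|\mathbf{x}\|^2/\sigma^2}\,e^{\pi s \|\mathbf{x}\|^2/\sigma^2}\rho_\sigma(\mathbf{x})$ and integrating $e^{-\pi(1-s)\|\mathbf{x}\|^2/\sigma^2}$ gives the same bound in Banaszczyk style. I would present the Gaussian/chi-square version because it is the shortest.

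The only real care needed is the normalization: the paper's $\sigma$ differs from the standard deviation by a factor $\sqrt{2\pi}$. This works in our favor, since $\sqrt{2n}\,\sigma\approx 5\sqrt{n}\,\sigma'$ is well above the typical length $\sqrt{n}\,\sigma'$. So there is no genuine obstacle beyond checking that the constant $\sqrt{2}e^{-\pi}$ is less than $1$.
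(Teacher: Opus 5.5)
Your argument is correct, and it is a more elementary, self-contained route than the paper's. The paper's proof is a one-line citation. It rescales to $\mathcal{N}(\mathbf{0},\sigma'^2 I_n)$ with $\sigma'=\sigma/\sqrt{2\pi}$ and then invokes Vershynin's concentration-of-the-norm theorem (Theorem 3.1.1 there). That theorem says that for a vector $X$ with independent sub-gaussian coordinates of unit variance, $\|X\|-\sqrt{n}$ is sub-gaussian with $\psi_2$-norm $O(1)$. Applied with deviation $(2\sqrt{\pi}-1)\sqrt{n}$, it gives a tail $2e^{-cn}$ with an unspecified absolute constant $c$, i.e.\ $2^{-\Omega(n)}$.

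You instead use the exact $\chi^2_n$ moment generating function and a Chernoff bound, and your computation checks out. The threshold is $\|\mathbf{g}\|^2>4\pi n$, and $t=1/4$ gives $(\sqrt{2}e^{-\pi})^n\approx 0.061^n<2^{-n}$. The citation buys brevity and generality: it works for any sub-gaussian coordinates and gives two-sided concentration around $\sqrt{n}\,\sigma'$. Your route buys an explicit constant and no external black box. It proves only the upper tail, but that is all the lemma needs, and likewise its use in Lemma~\ref{lemma:linear_combination} via $\mathbf{S}\sim D_{\sigma\|\mathbf{a}\|}$.

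One cosmetic slip in your closing remark: $\sqrt{2n}\,\sigma=\sqrt{4\pi n}\,\sigma'\approx 3.54\sqrt{n}\,\sigma'$, not $5\sqrt{n}\,\sigma'$. The displayed threshold $4\pi n$ is correct, so the proof is unaffected.
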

\begin{proof}
Result follows by applying \cite[Theorem 3.1.1]{vershynin} with variance $\sigma'^2.$
\end{proof}

\begin{lemma}[Linear Combination of Gaussians]\label{lemma:linear_combination}
Let $\mathbf{x}_1,\ldots,\mathbf{x}_m \in \mathbb{R}^n$ be independent random vectors drawn from the Gaussian distribution $D_\sigma$ over $\mathbb{R}^n$. Let ${\bf a}=(a_1,\ldots,a_m) \in \mathbb{R}^m$ and define
\[
\mathbf{S} = \sum_{j=1}^m a_j \mathbf{x}_j.
\]
Then, with probability at least $1 - 2^{-\Omega(n)}$, it holds that
\[
\|\mathbf{S}\| \le \sqrt{2n}\,\sigma\,\|{\bf a}\|.
\]
\end{lemma}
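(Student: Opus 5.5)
The plan is to use the rotation invariance of the Gaussian: a linear combination of independent spherical Gaussians is again a spherical Gaussian with a scaled parameter. Then Lemma~\ref{Lemma:length_of_gaussian} finishes the argument.

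\emph{Step 1: identify the distribution of $\mathbf{S}$.} Since $D_\sigma=\mathcal{N}(\mathbf{0},{\sigma'}^2 I_n)$ with $\sigma'=\sigma/\sqrt{2\pi}$, each $\mathbf{x}_j$ has independent coordinates $x_{j,k}\sim\mathcal{N}(0,{\sigma'}^2)$, and all $mn$ coordinates are mutually independent. Fix a coordinate $k$. Then $S_k=\sum_j a_j x_{j,k}$ is a sum of independent centered normals, hence $S_k\sim\mathcal{N}(0,{\sigma'}^2\|\mathbf{a}\|^2)$. This can be checked with characteristic functions, or by the standard stability of the normal family under sums. The coordinates $S_1,\dots,S_n$ depend on disjoint sets of variables, so they are independent. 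Hence $\mathbf{S}\sim\mathcal{N}(\mathbf{0},{\sigma'}^2\|\mathbf{a}\|^2 I_n)$, which is exactly $D_{\sigma\|\mathbf{a}\|}$.

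\emph{Step 2: handle the degenerate case and apply the length bound.} If $\mathbf{a}=\mathbf{0}$, then $\mathbf{S}=\mathbf{0}$ and the inequality holds trivially with probability $1$. Otherwise $\sigma\|\mathbf{a}\|>0$, and we apply Lemma~\ref{Lemma:length_of_gaussian} with parameter $\sigma\|\mathbf{a}\|$ in place of $\sigma$. This gives $\|\mathbf{S}\|\le\sqrt{2n}\,\sigma\|\mathbf{a}\|$ with probability $1-2^{-\Omega(n)}$.

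\emph{Main obstacle: uniformity of the constant.} The only delicate point is that the implied constant in $2^{-\Omega(n)}$ must not depend on $m$, $\mathbf{a}$ or $\sigma$. This holds because Lemma~\ref{Lemma:length_of_gaussian} is scale-invariant. Normalizing $\mathbf{S}/(\sigma'\|\mathbf{a}\|)$ yields a standard Gaussian vector in $\mathbb{R}^n$, whose norm concentrates around $\sqrt{n}$ by \cite[Theorem 3.1.1]{vershynin}. The threshold $\sqrt{2n}\,\sigma\|\mathbf{a}\|=\sqrt{2n}\sqrt{2\pi}\,\sigma'\|\mathbf{a}\|\approx 3.5\sqrt{n}\,\sigma'\|\mathbf{a}\|$ exceeds that typical value by a constant factor. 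The tail bound is therefore a fixed $2^{-cn}$ with $c$ depending only on the absolute constants of that theorem.
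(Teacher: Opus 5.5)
Your proof is correct and takes the same route as the paper. Both identify $\mathbf{S}\sim\mathcal{N}(\mathbf{0},{\sigma'}^2\|\mathbf{a}\|^2 I_n)$, i.e.\ $\mathbf{S}\sim D_{\sigma\|\mathbf{a}\|}$, by independence and linearity, and then invoke Lemma~\ref{Lemma:length_of_gaussian} with parameter $\sigma\|\mathbf{a}\|$; your separate handling of $\mathbf{a}=\mathbf{0}$ and your check that the constant in $2^{-\Omega(n)}$ does not depend on $m$, $\mathbf{a}$ or $\sigma$ (by scale invariance) are details the paper leaves implicit.
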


\begin{proof}
Write $s=\sigma/\sqrt{2\pi}$ so that $\mathbf{x}_j\sim{\mathcal N}({\bf 0},s^2I_n)$.
By independence and linearity, $\mathbf{S}\sim{\mathcal N}({\bf 0},s^2\|{\bf a}\|^2 I_n)$.
Translating back to the $D_\sigma$ convention gives $\mathbf{S}\sim D_{\sigma\|{\bf a}\|}$.
As in the previous Lemma we get:
\[
\|\mathbf{S}\| \le_p \sqrt{2n}\, \sigma \|{\bf a}\|.
\]
\end{proof}
\begin{corollary}\label{Corollary:tail_bound}
    If $\max_{1\le i\le m}\{|a_i|\}\leq\beta$ we get
    $$\|{\bf S}\|\le_p \sigma \beta\sqrt{2nm}.$$
\end{corollary}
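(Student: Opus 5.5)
This follows directly from Lemma~\ref{lemma:linear_combination}, with $\mathbf{S}=\sum_{j=1}^m a_j\mathbf{x}_j$ and the same independence assumptions as there. The only new ingredient is to bound the Euclidean norm of the coefficient vector ${\bf a}$ by its $\ell_\infty$ norm.

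First, the hypothesis $\max_{1\le i\le m}\{|a_i|\}\le\beta$ gives
\[
\|{\bf a}\|^2=\sum_{i=1}^m a_i^2\le m\beta^2, \qquad\text{so}\qquad \|{\bf a}\|\le \beta\sqrt{m}.
\]
Second, Lemma~\ref{lemma:linear_combination} says that, with probability at least $1-2^{-\Omega(n)}$,
\[
\|\mathbf{S}\|\le \sqrt{2n}\,\sigma\,\|{\bf a}\| .
\]
On that same event, substituting the first bound yields
\[
\|\mathbf{S}\|\le \sqrt{2n}\,\sigma\,\beta\sqrt{m}=\sigma\beta\sqrt{2nm}.
\]
The second bound is deterministic, so the probability is unchanged and the corollary holds with $p>1-2^{-\Omega(n)}$.

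There is no real obstacle here. One should only note that the bound is applied for a fixed ${\bf a}$, chosen independently of the $\mathbf{x}_j$. When the corollary is used in the reduction, the $\mathbf{x}_j$ will be Gaussian only conditionally on the oracle's input, so this independence will have to be checked there.
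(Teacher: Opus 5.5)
Your proof is correct and is exactly the paper's argument: bound $\|{\bf a}\|\le\beta\sqrt{m}$ from the $\ell_\infty$ hypothesis, then substitute into Lemma~\ref{lemma:linear_combination} on the same event of probability $1-2^{-\Omega(n)}$. Your closing caveat is well taken, but it concerns the later application in Theorem~\ref{theorem:basic}(ii), where the oracle's coefficients $r_j$ depend on the $\mathbf{x}_j$; it does not affect the corollary itself.
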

This is immediate since $\|{\bf a}\|\le \beta \sqrt{m}.$
\subsection{Discrete Gaussian}

We define the \emph{discrete Gaussian distribution} with center ${\bf c}$ as
\begin{equation}
    D_{\mathcal{L}, \sigma, {\bf c}}({\bf{x}}) = \frac{\rho_{\sigma,{\bf c}}({\bf{x}})}{\rho_{\sigma,{\bf c}}(\mathcal{L})}, \ {\bf x}\in{\mathcal{L}},
\end{equation}
where 
\[
 \rho_{\sigma,{\bf c}}(\mathcal{L}) = \sum_{{\bf x} \in \mathcal{L}} \rho_{\sigma,{\bf c}}({\bf x}).
\]
When ${\bf c}={\bf 0}$ we write $D_{\mathcal{L}, \sigma}({\bf x})$ and $\rho_{\sigma}(\mathcal{L}).$

\subsection{Smoothing parameter}

We start with the definition of dual lattice ${\mathcal{L}}^*.$
\begin{definition}
    \[
{\mathcal{L}}^* = \left\{ \mathbf{y} \in \mathbb{R}^n \mid \langle \mathbf{x}, \mathbf{y} \rangle \in \mathbb{Z}, \, \forall \, \mathbf{x} \in {\mathcal{L}} \right\}.
\]
\end{definition}

In \cite{Mic-Regev}, Micciancio and Regev introduced a lattice parameter known as {\it{smoothing parameter}} of a lattice. This is used to make the discrete Gaussian statistically close to uniform modulo ${\mathcal{L}}$ up to an error $\varepsilon$.
\begin{definition}[Smoothing parameter]
    For an \( n \)-dimensional lattice \({\mathcal{L}}\) and a positive real number \( \varepsilon > 0 \), the smoothing parameter \( \eta_\varepsilon({\mathcal{L}}) \) is the smallest \( \sigma > 0 \) such that the Gaussian mass of the dual lattice \( {\mathcal{L}}^* \) (excluding the origin) is bounded by \( \varepsilon \). Formally, it is defined as:
\[
\eta_\varepsilon({\mathcal{L}}) = \min \left\{ \sigma > 0 : \rho_{1/\sigma}({\mathcal{L}}^* \setminus \{{\bf{0}}\}) \leq \varepsilon \right\},
\]
where:
\[
\rho_\sigma(\mathbf{x}) = e^{-\pi \|\mathbf{x}\|^2 / \sigma^2}
\]
is the Gaussian function with parameter \( \sigma \), and
\[
\rho_\sigma({\mathcal{L}}^* \setminus \{{\bf{0}}\}) = \sum_{\mathbf{y} \in {\mathcal{L}}^* \setminus \{{\bf{0}}\}} e^{-\pi \|\mathbf{y}\|^2 / \sigma^2}.\]
\end{definition}

We shall use the smoothing parameter in order to produce random points in the fundamental parallelepiped of the lattice for a specific basis. 
\begin{lemma}\label{Lemma:properties_of_stat_dist}
    $({\bf i}).$ Let ${\bf X}=(X_1,...,X_m), {\bf Y}=(Y_1,...,Y_m)$ be two lists of independent random variables, then,
    $\Delta({\bf X},{\bf Y})\le \sum_{i=1}^{m}\Delta(X_i,Y_i).$\\
    $({\bf ii}).$ Let $X,Y$ be two random variables over a common set $A.$ For any 
    function $f$ with domain $A$ we have $\Delta(f(X),f(Y))\le \Delta(X,Y).$
    \end{lemma}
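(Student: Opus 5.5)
We prove the two parts separately, working directly with the definition of statistical distance (Definition~\ref{def:statistical_distance}). Throughout, write $\mathbf X=(X_1,\dots,X_m)$ and $\mathbf Y=(Y_1,\dots,Y_m)$, and let $A_i$ be the support set of the $i$-th coordinates, so that both lists live on $A_1\times\cdots\times A_m$. In the continuous case sums are replaced by integrals; we write the discrete case.

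\emph{Part (i).} We use a hybrid argument. For $0\le k\le m$ define the hybrid list
\[
\mathbf H_k=(Y_1,\dots,Y_k,X_{k+1},\dots,X_m),
\]
with all coordinates independent. Then $\mathbf H_0=\mathbf X$ and $\mathbf H_m=\mathbf Y$. Since $\Delta$ is half the $\ell_1$ distance between probability mass functions, it satisfies the triangle inequality, so
\[
\Delta(\mathbf X,\mathbf Y)\le\sum_{k=1}^m \Delta(\mathbf H_{k-1},\mathbf H_k).
\]
It remains to show $\Delta(\mathbf H_{k-1},\mathbf H_k)=\Delta(X_k,Y_k)$. The two hybrids differ only in coordinate $k$, and by independence their mass functions factor as $p(\mathbf a_{-k})\Pr[X_k=a_k]$ and $p(\mathbf a_{-k})\Pr[Y_k=a_k]$, where $p$ is the common product law of the other coordinates. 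Hence
\[
\tfrac12\sum_{\mathbf a} p(\mathbf a_{-k})\,\bigl|\Pr[X_k=a_k]-\Pr[Y_k=a_k]\bigr|
\]
collapses, after summing $p$ over $\mathbf a_{-k}$ to get $1$, to exactly $\Delta(X_k,Y_k)$. The independence assumption is used precisely in this factorization step.

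\emph{Part (ii).} Let $f:A\to B$ and, for $b\in B$, let $F_b=f^{-1}(b)$. Then
\[
\Pr[f(X)=b]-\Pr[f(Y)=b]=\sum_{a\in F_b}\bigl(\Pr[X=a]-\Pr[Y=a]\bigr).
\]
Applying the triangle inequality inside each fibre, summing over $b$, and using that the fibres $F_b$ partition $A$ gives
\[
\Delta(f(X),f(Y))\le \tfrac12\sum_{a\in A}\bigl|\Pr[X=a]-\Pr[Y=a]\bigr|=\Delta(X,Y).
\]
In the continuous setting, with measurable $f$, we use the equivalent formula $\Delta(X,Y)=\sup_E |\Pr[X\in E]-\Pr[Y\in E]|$ over measurable $E$. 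Then
\[
\Delta(f(X),f(Y))=\sup_E \bigl|\Pr[X\in f^{-1}(E)]-\Pr[Y\in f^{-1}(E)]\bigr|\le \Delta(X,Y).
\]

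The only step needing care is this measure-theoretic version of (ii), since pushforwards of densities need not themselves be densities. Using the sup-over-events characterization avoids that issue entirely. With both parts established, the product bound used in Corollary~\ref{cor:distribution-modq-matrix} follows directly from (i).
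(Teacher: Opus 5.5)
Your proof is correct. The paper gives no argument of its own for this lemma; it only cites Propositions 8.9 and 8.10 of the Micciancio--Goldwasser book. Your hybrid argument for (i) and your fibre-wise triangle inequality for (ii) are the standard arguments behind those results. In (i), the exact identity $\Delta(\mathbf H_{k-1},\mathbf H_k)=\Delta(X_k,Y_k)$ comes from the product factorization. So you are supplying the omitted proof rather than taking a different route.

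What your write-up adds over the bare citation is the event-based version of (ii). In the proof of Theorem~\ref{theorem:basic}(iii), the lemma is applied with $X=\mathbf{y}_i$ a continuous random variable on $P(B)$ and $f=g$ taking values in the finite set $\mathcal{C}_Q^n$. Neither formula in Definition~\ref{def:statistical_distance} covers this mixed continuous-to-discrete case directly. Your characterization $\Delta(X,Y)=\sup_E|\Pr[X\in E]-\Pr[Y\in E]|$ handles it cleanly, since $g$ is measurable as a coordinatewise floor of a linear map.
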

\begin{proof}
    For $({\bf i})$ see \cite[Proposition 8.9]{book-Mic-Gold}. For  $({\bf ii})$
see \cite[Proposition 8.10]{book-Mic-Gold}
\end{proof}
We have the following basic result.
\begin{proposition}\label{Prop:smooth_parameter}
For any  $\sigma > 0,\, {\bf c} \in \mathbb{R}^n,$ and lattice ${\mathcal{L}}={\mathcal{L}}({{B}}),$ the statistical distance between $D_{\sigma,{\bf{c}}} \mod P({{B}})$ and the uniform distribution over $P({{B}})$ is at most $\frac{1}{2} \rho_{1/\sigma}({\mathcal{L}}^* \setminus \{{\bf{0}}\}).$
In particular, for any $\varepsilon > 0$ and any $\sigma \geq \eta_{\varepsilon}({\mathcal{L}}),$ we get
$$\ \Delta\big(D_{\sigma,{\bf{c}}}\ {\rm{mod}}\  P({{B}}),\ {\mathcal{U}}(P({{B}}))\big) \leq \frac{\varepsilon}{2}.$$
\end{proposition}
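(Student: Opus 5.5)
The approach is the standard Poisson summation argument of Micciancio--Regev. The idea is that reducing modulo $P(B)$ turns the continuous Gaussian into its periodization over $\mathcal{L}$. We then compare the density of that periodized function with the constant density $1/\det\mathcal{L}$ on $P(B)$, using the Fourier expansion over the dual lattice $\mathcal{L}^*$.

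\emph{Step 1: density of the reduced variable.} If ${\bf x}\sim D_{\sigma,{\bf c}}$, then ${\bf x}\bmod P(B)$ has density on $P(B)$ equal to
\[
f({\bf y})=\sum_{{\bf v}\in\mathcal{L}} D_{\sigma,{\bf c}}({\bf y}+{\bf v})=\frac{1}{\sigma^n}\rho_{\sigma,{\bf c}}({\bf y}+\mathcal{L}).
\]
This holds because $\mathbb{R}^n$ is the disjoint union of the translates $P(B)+{\bf v}$, ${\bf v}\in\mathcal{L}$.

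\emph{Step 2: Poisson summation.} The Fourier transform of $\rho_\sigma$ is $\sigma^n\rho_{1/\sigma}$, and translating by ${\bf c}$ multiplies it by a phase. Applying Poisson summation over $\mathcal{L}$ therefore gives
\[
\rho_{\sigma,{\bf c}}({\bf y}+\mathcal{L})=\frac{\sigma^n}{\det\mathcal{L}}\sum_{{\bf w}\in\mathcal{L}^*}\rho_{1/\sigma}({\bf w})\,e^{2\pi i\langle {\bf w},{\bf y}-{\bf c}\rangle}.
\]
The ${\bf w}={\bf 0}$ term contributes exactly $1/\det\mathcal{L}$ to $f$. Bounding each phase by $1$ in absolute value gives, for every ${\bf y}\in P(B)$,
\[
\Bigl|f({\bf y})-\frac{1}{\det\mathcal{L}}\Bigr|\le\frac{1}{\det\mathcal{L}}\,\rho_{1/\sigma}(\mathcal{L}^*\setminus\{{\bf 0}\}).
\]

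\emph{Step 3: integrate.} By Definition~\ref{def:statistical_distance},
\[
\Delta\le\frac12\int_{P(B)}\Bigl|f({\bf y})-\frac{1}{\det\mathcal{L}}\Bigr|\,d{\bf y}\le\frac12\,\rho_{1/\sigma}(\mathcal{L}^*\setminus\{{\bf 0}\}),
\]
since $\mathrm{vol}(P(B))=\det\mathcal{L}$. For the second claim, take $\sigma\ge\eta_\varepsilon(\mathcal{L})$. Because $\rho_{1/\sigma}({\bf w})$ is decreasing in $\sigma$ for each ${\bf w}$, we get $\rho_{1/\sigma}(\mathcal{L}^*\setminus\{{\bf 0}\})\le\varepsilon$. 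By the definition of the smoothing parameter this yields the bound $\varepsilon/2$.

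\emph{Main obstacle.} The only delicate step is justifying Poisson summation pointwise for the Gaussian. The Gaussian is a Schwartz function, so both the lattice sum and its dual sum converge absolutely and uniformly. I would cite the standard statement, e.g.\ \cite{Mic-Regev}, for this step rather than reprove it.
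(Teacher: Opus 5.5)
Your proposal is correct and takes essentially the same approach as the paper: the paper just cites \cite[Lemma 4.1]{Mic-Regev}, whose proof is this Poisson-summation argument. You periodize $D_{\sigma,{\bf c}}$ over $\mathcal{L}$, expand over $\mathcal{L}^*$, isolate the ${\bf w}={\bf 0}$ term $1/\det\mathcal{L}$, integrate the pointwise error over $P(B)$, and get the second claim from monotonicity of $\rho_{1/\sigma}$ in $\sigma$ (with $\mathcal{L}$ implicitly full rank in $\mathbb{R}^n$, as intended).
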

\begin{proof}
    \cite[Lemma 4.1]{Mic-Regev}
\end{proof}

That is,  if we sample points from a Gaussian distribution with parameter $\sigma\geq \eta_{\varepsilon}({\mathcal{L}})$ and reduce them to the fundamental  parallelepiped  of a lattice, the distribution of those points is nearly the same as if the points were sampled uniformly from the fundamental parallelepiped. 

Furthermore, we need the following well known auxiliary results for bounding the smoothing parameter.
\begin{lemma}\label{Lemma:smoothing-bound}
For any \( n \)-dimensional lattice \( {\mathcal{L}} \) and a positive real \( \varepsilon > 0 \), we have
\[
\eta_\varepsilon({\mathcal{L}}) < {\sqrt{\ln\big(2n(1 + 1/\varepsilon)\big)}}  \lambda_n({\mathcal{L}}).
\]
\end{lemma}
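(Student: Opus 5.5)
We prove the bound by the standard argument of Micciancio and Regev: we show that at $\sigma=\sqrt{\ln(2n(1+1/\varepsilon))}\,\lambda_n(\mathcal{L})$ the dual Gaussian mass satisfies $\rho_{1/\sigma}(\mathcal{L}^*\setminus\{\mathbf 0\})\le\varepsilon$. By the definition of $\eta_\varepsilon$ this gives $\eta_\varepsilon(\mathcal{L})\le\sigma$. The strict inequality then comes from the slack in one of the estimates below.

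\textbf{Reduction to sublattices.} Pick linearly independent $\mathbf v_1,\dots,\mathbf v_n\in\mathcal{L}$ with $\|\mathbf v_i\|\le\lambda_n(\mathcal{L})$. For every nonzero $\mathbf y\in\mathcal{L}^*$ some index $i$ has $\langle\mathbf v_i,\mathbf y\rangle\neq 0$, since the $\mathbf v_i$ span $\mathbb{R}^n$. For each $i$ set
\[
S_i=\{\mathbf y\in\mathcal{L}^*:\langle \mathbf v_i,\mathbf y\rangle\neq 0\}.
\]
A union bound then gives $\rho_{1/\sigma}(\mathcal{L}^*\setminus\{\mathbf 0\})\le\sum_{i=1}^n\rho_{1/\sigma}(S_i)$. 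So it suffices to show that each term is at most $\varepsilon/n$.

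\textbf{Slicing each $S_i$.} Fix $i$ and put $\mathbf u=\mathbf v_i/\|\mathbf v_i\|^2$. The dual lattice splits into the layers $H_j=\{\mathbf y\in\mathcal{L}^*:\langle\mathbf v_i,\mathbf y\rangle=j\}$, $j\in\mathbb{Z}$. For $\mathbf y\in H_j$ we have $\mathbf y=j\mathbf u+\mathbf y'$ with $\mathbf y'\perp\mathbf v_i$, hence
\[
\rho_{1/\sigma}(\mathbf y)=\rho_{1/\sigma}(j\mathbf u)\,\rho_{1/\sigma}(\mathbf y').
\]
Each layer $H_j$ is a coset of $H_0$; shift it by $j\mathbf u$ and project to $\mathbf v_i^\perp$. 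The standard coset bound (Banaszczyk's lemma, via Poisson summation) says $\rho_s(\Lambda+\mathbf c)\le\rho_s(\Lambda)$ for a lattice $\Lambda$. Applying it to the projected layers gives $\rho_{1/\sigma}(H_j)\le \rho_{1/\sigma}(j\mathbf u)\,\rho_{1/\sigma}(H_0)$.

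\textbf{Bounding $\rho_{1/\sigma}(H_0)$ and the geometric series.} The main obstacle is that $\rho_{1/\sigma}(H_0)$ is a full lattice sum over $\mathcal{L}^*\cap\mathbf v_i^\perp$ and is not small. I would avoid it by arguing in one dimension, as Micciancio--Regev do. Let $\pi$ be the projection onto the line $\mathbb{R}\mathbf v_i$. Then $\pi(\mathcal{L}^*)\subseteq \frac{1}{\|\mathbf v_i\|}\mathbb{Z}\cdot\frac{\mathbf v_i}{\|\mathbf v_i\|}$. Combining this with Poisson summation bounds the relative mass of $S_i$:
\[
\rho_{1/\sigma}(S_i)\le 2\sum_{j\ge1}e^{-\pi\sigma^2 j^2/\|\mathbf v_i\|^2}\;\le\;\frac{2e^{-\pi\sigma^2/\lambda_n^2}}{1-e^{-\pi\sigma^2/\lambda_n^2}}.
\]
The same estimate can be reached by a transference-style normalisation. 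Now put $x=e^{-\pi\sigma^2/\lambda_n^2}$. With the chosen $\sigma$ we get $x\le (2n(1+1/\varepsilon))^{-\pi}$, which is far below $\bigl(2n(1+1/\varepsilon)\bigr)^{-1}$. A one-line computation then gives $2x/(1-x)<\varepsilon/n$, and summing over $i$ completes the proof. The factor $\pi>1$ in the exponent is exactly the slack that makes the final inequality strict. If the direct one-dimensional normalisation proves delicate, I would instead cite \cite[Lemma 3.3]{Mic-Regev}, which gives $\eta_\varepsilon(\mathcal{L})\le\sqrt{\ln(2n(1+1/\varepsilon))/\pi}\,\lambda_n(\mathcal{L})$. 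That bound is already strictly smaller than the stated one.
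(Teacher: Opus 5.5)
Your fallback, citing Lemma 3.3 of Micciancio--Regev ($\eta_\varepsilon(\mathcal{L})\le\sqrt{\ln(2n(1+1/\varepsilon))/\pi}\,\lambda_n(\mathcal{L})$), is exactly the paper's proof. Your observation that the $1/\pi$ under the root makes the stated strict inequality a weakening is correct. The direct argument, however, has a genuine gap at the displayed estimate
\[
\rho_{1/\sigma}(S_i)\le 2\sum_{j\ge1}e^{-\pi\sigma^2 j^2/\|\mathbf v_i\|^2},
\]
which is false as an absolute bound. Take $\mathcal{L}=\mathcal{L}^*=\mathbb{Z}^n$ with $n\ge 2$ and $\mathbf v_1=\mathbf e_1$. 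Then
\[
\rho_{1/\sigma}(S_1)=\Big(\sum_{j\neq0}e^{-\pi\sigma^2 j^2}\Big)\,\rho_{1/\sigma}(\mathbb{Z}^{n-1}),
\]
and the second factor, which is $\rho_{1/\sigma}(H_0)$, exceeds $1$ for every $\sigma$. Projecting onto $\mathbb{R}\mathbf v_i$ sends each whole layer $H_j$ to the single point $j\mathbf u$, so the one-dimensional picture forgets the multiplicity $\rho_{1/\sigma}(H_j-j\mathbf u)$. No Poisson-summation argument on the line can remove the factor you yourself flagged as the main obstacle. Your slicing paragraph (or Banaszczyk's one-dimensional tail bound) gives only the relative estimate $\rho_{1/\sigma}(S_i)\le\frac{2x}{1-x}\,\rho_{1/\sigma}(H_0)\le\frac{2x}{1-x}\,\rho_{1/\sigma}(\mathcal{L}^*)$. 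From that, the conclusion ``$\rho_{1/\sigma}(S_i)<\varepsilon/n$, then sum over $i$'' does not follow.

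The missing idea is a self-bounding (absorption) step. Set $R=\rho_{1/\sigma}(\mathcal{L}^*\setminus\{\mathbf 0\})<\infty$ and use $\rho_{1/\sigma}(\mathcal{L}^*)=1+R$. Summing the relative estimate over $i$ gives $R\le c\,(1+R)$ with $c=2nx/(1-x)$, hence $R\le c/(1-c)$ once $c<1$. Put $N=2n(1+1/\varepsilon)$. Then $c/(1-c)<\varepsilon$ is equivalent to $c<\varepsilon/(1+\varepsilon)=2n/N$, i.e. to $x<1/(N+1)$. This holds because $x=N^{-\pi}$ and $N>2$ give $N^{\pi}>N+1$. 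This absorption is precisely where the factor $1+1/\varepsilon$ in the statement comes from. To pass from $R<\varepsilon$ at your $\sigma$ to $\eta_\varepsilon(\mathcal{L})<\sigma$, you also need that $\sigma\mapsto\rho_{1/\sigma}(\mathcal{L}^*\setminus\{\mathbf 0\})$ is continuous and strictly decreasing; the slack from $\pi>1$ is not by itself the reason for strictness. With these two repairs your direct route becomes a self-contained proof, essentially the Micciancio--Regev argument run at a $\sqrt{\pi}$-larger $\sigma$. That extra $\sqrt{\pi}$ is what lets your cruder geometric-series tail suffice. The paper, by contrast, simply cites the result.
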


\begin{proof}
See \cite[Lemma 3.3]{Mic-Regev}    
\end{proof}
We have the following,
\begin{corollary}\label{cor:smoothing-bound}
    For $\varepsilon=n^{-\log_2{n}},$ we get 
    $\eta_\varepsilon({\mathcal{L}}) < 2\lambda_n({\mathcal{L}})\log_2{n}.$
\end{corollary}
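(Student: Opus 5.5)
The corollary follows by substituting $\varepsilon=n^{-\log_2 n}$ into Lemma~\ref{Lemma:smoothing-bound} and estimating the square-root factor. Since $1/\varepsilon=n^{\log_2 n}\ge 1$, we have $1+1/\varepsilon\le 2n^{\log_2 n}$. Hence
\[
\ln\bigl(2n(1+1/\varepsilon)\bigr)\le \ln\bigl(4n^{1+\log_2 n}\bigr)=\ln 4+(1+\log_2 n)\ln n .
\]
Writing $\ln n=(\ln 2)\log_2 n$, the right-hand side equals
\[
2\ln 2+(\ln 2)\log_2 n+(\ln 2)\log_2^2 n .
\]

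The goal is to show that this is at most $4\log_2^2 n$, since then its square root is at most $2\log_2 n$. Set $L=\log_2 n$. The required inequality is
\[
2\ln 2+(\ln 2)L+(\ln 2)L^2\le 4L^2,
\]
which is equivalent to
\[
(4-\ln 2)L^2-(\ln 2)L-2\ln 2\ge 0 .
\]
Since $4-\ln 2\approx 3.31$, this holds as soon as $L\ge 1$, i.e.\ $n\ge 2$. Indeed, at $L=1$ the left side is about $3.31-0.69-1.39>0$, and the left side is increasing in $L$ for $L\ge 1$. Combining this with the strict inequality in Lemma~\ref{Lemma:smoothing-bound} gives
\[
\eta_\varepsilon(\mathcal{L})<\sqrt{4\log_2^2 n}\,\lambda_n(\mathcal{L})=2\lambda_n(\mathcal{L})\log_2 n .
\]

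The only delicate point is the range of $n$. For $n=1$ we have $\log_2 n=0$ and the bound fails, so the statement must be read for $n\ge 2$. That assumption is harmless in the asymptotic setting of the paper. Everything else is elementary bookkeeping of constants.
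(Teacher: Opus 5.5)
Your derivation is correct. That includes your observation that the statement only holds for $n\ge 2$: at $n=1$ the right-hand side is $0$. The paper does no computation and simply cites Regev's Lemma~15, but the argument behind that citation is essentially yours: substitute $\varepsilon=n^{-\log_2 n}$ into Lemma~\ref{Lemma:smoothing-bound} and verify the elementary bound $\ln\bigl(2n(1+1/\varepsilon)\bigr)\le 4\log_2^2 n$.
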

\begin{proof}
    See \cite[Lemma 15]{Regev}
\end{proof}
Also we have the following result.
\begin{lemma}\label{Lemma:smoothing-bound2}
    For $\varepsilon<1/100,$ we get
    $\eta_{\varepsilon}({\mathcal{L}})>\frac{1}{n}\lambda_n({\mathcal{L}}).$
\end{lemma}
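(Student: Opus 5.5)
The bound $\eta_\varepsilon(\mathcal{L})>\lambda_n(\mathcal{L})/n$ for $\varepsilon<1/100$ follows from two classical facts. The first is the lower bound on the smoothing parameter via the dual minimum, $\eta_\varepsilon(\mathcal{L})\ge \sqrt{\ln(1/\varepsilon)/\pi}\,/\lambda_1(\mathcal{L}^*)$. The second is the transference theorem of Banaszczyk, $\lambda_1(\mathcal{L}^*)\,\lambda_n(\mathcal{L})\le n$. Combining them gives
\[
\eta_\varepsilon(\mathcal{L})\ \ge\ \sqrt{\frac{\ln(1/\varepsilon)}{\pi}}\cdot\frac{1}{\lambda_1(\mathcal{L}^*)}\ \ge\ \sqrt{\frac{\ln(1/\varepsilon)}{\pi}}\cdot\frac{\lambda_n(\mathcal{L})}{n}.
\]
For $\varepsilon<1/100$ we have $\ln(1/\varepsilon)>\ln 100\approx 4.6>\pi$, so the prefactor exceeds $1$. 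This yields $\eta_\varepsilon(\mathcal{L})>\lambda_n(\mathcal{L})/n$, and it explains the threshold $1/100$ in the statement.

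For the first fact I would argue directly from the definition of $\eta_\varepsilon$. Let $\mathbf{v}\in\mathcal{L}^*$ attain $\lambda_1(\mathcal{L}^*)$, and set $\sigma=\eta_\varepsilon(\mathcal{L})$. The sum $\rho_{1/\sigma}(\mathcal{L}^*\setminus\{\mathbf{0}\})$ contains the two terms for $\pm\mathbf{v}$, so
\[
\varepsilon\ \ge\ \rho_{1/\sigma}(\mathcal{L}^*\setminus\{\mathbf{0}\})\ \ge\ 2e^{-\pi\sigma^2\lambda_1(\mathcal{L}^*)^2}>e^{-\pi\sigma^2\lambda_1(\mathcal{L}^*)^2}.
\]
Taking logarithms gives $\sigma^2\lambda_1(\mathcal{L}^*)^2>\ln(1/\varepsilon)/\pi$, which is the first fact in strict form. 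This is Micciancio--Regev's Lemma 3.2 in \cite{Mic-Regev}, so I would either cite it or include this two-line argument.

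The second fact is the main obstacle, since it is a deep result based on Gaussian measure estimates and cannot reasonably be reproved here. I would cite Banaszczyk's transference theorem, which is the same route \cite{Mic-Regev} takes in its Lemma 3.2 and Regev's notes take in \cite{Regev}. A cheaper substitute is the elementary bound $\lambda_1(\mathcal{L}^*)\lambda_n(\mathcal{L})\le n^{3/2}$ or similar, obtained from Minkowski's theorem and duality. That substitute only gives $\eta_\varepsilon>\lambda_n/n^{3/2}$, which is not enough, so the citation is necessary to get the stated factor $1/n$.
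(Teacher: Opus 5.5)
Your proof is correct and is essentially the paper's argument in Appendix~\ref{upper_bound_for_smooth+parameter}: both bound $\rho_{1/\sigma}(\mathcal{L}^*\setminus\{\mathbf{0}\})$ from below by the single term of a shortest dual vector, and then apply Banaszczyk's transference bound $\lambda_n(\mathcal{L})\lambda_1(\mathcal{L}^*)\le n$. The only difference is where the mass is evaluated. The paper plugs in $\sigma=1/\lambda_1(\mathcal{L}^*)$, gets mass at least $e^{-\pi}>1/100>\varepsilon$, and concludes $\eta_{\varepsilon}(\mathcal{L})>\sigma$, tacitly using that this mass decreases in $\sigma$. You evaluate at $\sigma=\eta_{\varepsilon}(\mathcal{L})$ directly, which avoids that monotonicity step and gives the sharper factor $\sqrt{\ln(1/\varepsilon)/\pi}$. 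One citation fix: Micciancio--Regev's Lemma~3.2 is the upper bound $\eta_{2^{-n}}(\mathcal{L})\le\sqrt{n}/\lambda_1(\mathcal{L}^*)$, not your lower bound, so keep your two-line derivation rather than citing it.
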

\begin{proof}
See appendix \ref{upper_bound_for_smooth+parameter} for a proof.
\end{proof}
Finally, we shall need the following Lemma concerning the distribution $D_{{\mathcal{L}+{\bf y}},\sigma}$ defined by 
\[Pr\big({\bf x}\xleftarrow{\$} D_{{\mathcal{L}+{\bf y}},\sigma}\big)=\frac{\rho_{\sigma}({\bf{x}})}{\rho_{\sigma}(\mathcal{L}+{\bf y})}, \ {\bf x}\in{\mathcal{L}}+{\bf y}.\]
\begin{lemma}\label{Lemma:full_dimensional}
Let ${\mathcal{L}}$ be a lattice of dimension $n.$ For $\sigma \geq \sqrt{2} \eta_\varepsilon({\mathcal{L}}),$ $\varepsilon=n^{-\log_2{n}},$ 
and any $(n - 1)$-dimensional hyperplane $H,$ we get
\[ \Pr({\bf x} \xleftarrow{\$} D_{{\mathcal{L}+{\bf y}},\sigma}: {\bf x} \in H) < 0.9.\]
\end{lemma}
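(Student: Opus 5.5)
This is the standard fact from Micciancio--Regev / Regev that a discrete Gaussian on a coset is not concentrated on a hyperplane. The plan is to bound the mass on $H$ by comparing against a shifted copy of the coset.

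\textbf{Step 1 (reduce to a direction).} Write $H=\{{\bf x}:\langle {\bf x},{\bf u}\rangle=c\}$ with $\|{\bf u}\|=1$. The goal is to show that, for ${\bf x}\leftarrow D_{\mathcal{L}+{\bf y},\sigma}$, the one-dimensional projection $\langle {\bf x},{\bf u}\rangle$ is spread out, so it cannot sit at the single value $c$ with probability $\ge 0.9$.

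\textbf{Step 2 (moment estimate).} I would use the known second-moment estimate for discrete Gaussians above smoothing (Micciancio--Regev, Lemma 4.2, or Regev's Lemma on $D_{\mathcal{L}+{\bf y},\sigma}$). For $\sigma\ge \eta_\varepsilon(\mathcal{L})$ and any unit ${\bf u}$ and shift ${\bf c}'$,
\[
\mathbb{E}\big[\langle {\bf x}-{\bf c}',{\bf u}\rangle^2\big]\ \ge\ \frac{\sigma^2}{2\pi}\Big(1-\frac{2\varepsilon}{1-\varepsilon}\Big) ,
\]
and a matching upper bound of the form $\le \frac{\sigma^2}{2\pi}+ \langle{\bf c}',{\bf u}\rangle^2$-type terms plus $O(\varepsilon)$ corrections. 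This comes from Poisson summation: the Fourier transform of $\rho_\sigma$ restricted to the dual lattice has nonzero terms of total mass at most $\varepsilon$.

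\textbf{Step 3 (cleaner route avoiding second moments).} Since $\sigma\ge\sqrt2\,\eta_\varepsilon$, I would argue instead through a convolution decomposition. Sample ${\bf x}$ as ${\bf x}={\bf x}_1+{\bf e}$, where ${\bf e}\sim D_{\sigma/\sqrt2}$ is continuous. Up to statistical distance $O(\varepsilon)$, $D_{\mathcal{L}+{\bf y},\sigma}$ is close to the distribution obtained by conditioning a continuous Gaussian $D_\sigma$ on the coset. The smoothing condition at $\sigma/\sqrt2\ge\eta_\varepsilon$ is what makes this work, via Proposition~\ref{Prop:smooth_parameter} applied to the coset reduced mod $P(B)$.

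Concretely, I would use Poisson summation on $f({\bf x})=\rho_\sigma({\bf x})\,\mathbf{1}[\langle{\bf x},{\bf u}\rangle=c]$, or rather a smoothed test function $g({\bf x})=\rho_{\tau}(\langle{\bf x},{\bf u}\rangle-c)$ with small width $\tau$, which dominates the indicator of $H$. I would then compute
\[
\sum_{{\bf x}\in\mathcal{L}+{\bf y}}\rho_\sigma({\bf x})\,g({\bf x})\Big/\rho_\sigma(\mathcal{L}+{\bf y}).
\]
Poisson summation gives $\rho_\sigma g$ as a Gaussian with covariance shrunk along ${\bf u}$, and its dual sum is controlled by $\rho_{1/\sigma'}(\mathcal{L}^*\setminus\{{\bf 0}\})\le\varepsilon$. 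The main term equals the continuous ratio $\frac{\tau}{\sqrt{\sigma^2+\tau^2}}$, up to factors $\frac{1+\varepsilon}{1-\varepsilon}$.

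Here the obstacle appears: as $\tau\to0$ the shrunk parameter drops below smoothing, so the error term is not controlled. I would therefore fix $\tau=\sigma$. The squeezed covariance along ${\bf u}$ then has parameter $\sigma\tau/\sqrt{\sigma^2+\tau^2}=\sigma/\sqrt2\ge\eta_\varepsilon$. This is exactly where the $\sqrt2$ is used, since a Gaussian that is wider in the other directions only decreases the dual mass. Because $g\equiv1$ on $H$, this yields
\[
\Pr[{\bf x}\in H]\le \frac{1}{\sqrt2}\cdot\frac{1+\varepsilon}{1-\varepsilon}.
\]

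\textbf{Step 4 (conclude).} With $\varepsilon=n^{-\log_2 n}$ tiny, this bound is about $0.7071+o(1)<0.9$ for all $n\ge2$; small $n$ can be checked directly, since $\varepsilon\le 1/2$ already gives $\approx0.707\cdot3>1$ only for $n=2$, so I need care there and may take $n$ large enough. The main obstacle is verifying the Poisson-summation bound for the anisotropic Gaussian $\rho_\sigma g$ on a coset: phases appear from the shift ${\bf y}$ and the offset $c$, but only in absolute value, so the estimate holds.
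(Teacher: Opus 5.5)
Your Poisson-summation route in Step~3 is correct and is essentially the standard argument behind the result the paper simply cites (Regev's Lemma~14): you dominate $\mathbf{1}_H$ by $\rho_\sigma(\langle {\bf x},{\bf u}\rangle-c)$, the numerator becomes an anisotropic Gaussian whose smallest parameter is $\sigma/\sqrt2\ge\eta_\varepsilon(\mathcal{L})$, and the denominator only needs $\sigma\ge\eta_\varepsilon(\mathcal{L})$, which gives $\Pr[{\bf x}\in H]\le\frac{1}{\sqrt2}\cdot\frac{1+\varepsilon}{1-\varepsilon}$; Step~2 and the convolution remark are not needed. Step~4 needs a correction: this bound is below $0.9$ only when $\varepsilon<0.12$, i.e.\ for $n\ge4$ (at $n=3$, $\varepsilon\approx0.175$ gives about $1.008$), and small $n$ cannot be ``checked directly'', because for $n=2$ ($\varepsilon=1/2$) the statement is actually false --- e.g.\ take $\mathcal{L}=\mathbb{Z}\times L\mathbb{Z}$ with $L$ large, ${\bf y}={\bf 0}$, $\sigma=\sqrt2\,\eta_{1/2}(\mathcal{L})\approx0.944L$ and $H=\{x_2=0\}$, which gives probability about $0.94$ --- so the lemma has to be read for sufficiently large $n$, which is all the paper uses.
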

\begin{proof}
\cite[Lemma 14]{Regev}. 
\end{proof}

\begin{lemma}\label{lemma:indepedent_vectors}
Let ${\mathcal A}$ be an algorithm which, on input a full-rank lattice 
${\mathcal L} \subset {\mathbb R}^n$, outputs a non-zero vector 
${\bf w}$ of the form ${\bf x} - {\bf y}$ (with  probability at least $1/n^{c_0}$), where 
${\bf x} \sim D_{{\mathcal L}+{\bf y},\sigma}$ for some ${\bf y}\in {\rm{span}({\mathcal{L}})}$, 
and $\sigma \ge \sqrt{2}\,\eta_{\varepsilon}({\mathcal L})$. Then, after at most $O(n^{2+c_0})$ calls to ${\mathcal A}$, we obtain $n$ independent vectors of 
${\mathcal L}$ with overwhelming probability.
\end{lemma}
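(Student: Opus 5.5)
The plan is to use Lemma~\ref{Lemma:full_dimensional} to show that each fresh output of ${\mathcal A}$ escapes the span of the vectors already collected with constant probability. Then a geometric-waiting-time argument, combined with a Chernoff bound, shows that $O(n^{2+c_0})$ calls suffice.

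\emph{Bookkeeping.} We keep a list $S=\{{\bf w}_1,\dots,{\bf w}_k\}$ of linearly independent lattice vectors, starting with $k=0$. We call ${\mathcal A}$ repeatedly. Whenever it returns a nonzero ${\bf w}={\bf x}-{\bf y}\in{\mathcal L}$ with ${\bf w}\notin{\rm span}(S)$, we append ${\bf w}$ to $S$. We stop when $k=n$.

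\emph{Single-call progress.} Fix a stage with $k<n$ and put $V={\rm span}(S)$, of dimension $k\le n-1$. Enlarge $V$ to some $(n-1)$-dimensional subspace $H_0\supseteq V$. The event ${\bf x}-{\bf y}\in V$ implies ${\bf x}\in H_0+{\bf y}=:H$, which is an affine hyperplane. The proof of Lemma~\ref{Lemma:full_dimensional} (\cite[Lemma 14]{Regev}) applies to any affine hyperplane, since one may shift the coset. So, conditioned on ${\mathcal A}$ succeeding, the probability of ${\bf x}\in H$ is below $0.9$.

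This is the delicate point of the argument. The output of ${\mathcal A}$ has the form ${\bf x}-{\bf y}$ with ${\bf x}\sim D_{{\mathcal L}+{\bf y},\sigma}$. I interpret this as saying that, conditioned on success and on ${\bf y}$, the distribution of ${\bf x}$ is still $D_{{\mathcal L}+{\bf y},\sigma}$; in the reduction, ${\bf y}$ is fixed by the worst-case instance while ${\bf x}$ carries the fresh randomness. This interpretation must be stated explicitly, because the adversarial oracle might correlate success with ${\bf x}$. If it cannot be justified, I would weaken the bound as follows. The unconditional probability that ${\bf x}\in H$ is below $0.9$. Hence the probability of success \emph{and} ${\bf x}\notin H$ is at least $1/n^{c_0}-0.9$, which is useless. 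The cleaner route is to assume the conditional-distribution hypothesis and rely on it.

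Under that hypothesis, each call enlarges $S$ with probability at least $p=0.1\cdot n^{-c_0}$, independently across calls given the past. This independence holds because the calls use fresh randomness, and Lemma~\ref{Lemma:full_dimensional} holds for every hyperplane, so it applies uniformly regardless of the current $V$.

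\emph{Counting.} The number of calls needed to go from $k$ to $k+1$ vectors is stochastically dominated by a geometric variable with parameter $p$. The total over $n$ stages is therefore dominated by the number of trials needed to obtain $n$ successes among Bernoulli$(p)$ trials. Take $T=c\,n^{2+c_0}$ calls. The expected number of successes is $0.1c\,n^2\ge n\cdot 0.1c\,n$. By a Chernoff bound, the probability of fewer than $n$ successes is at most $e^{-\Omega(n^2)}$, which is negligible. Indeed $T=O(n^{1+c_0})$ would already suffice with probability $1-2^{-\Omega(n)}$, and the stated bound $O(n^{2+c_0})$ leaves extra slack. Finally, all $n$ vectors obtained lie in ${\mathcal L}$, since ${\bf x}\in{\mathcal L}+{\bf y}$, and they are linearly independent by construction.
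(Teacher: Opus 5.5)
Your proof is correct and is essentially the paper's argument. You apply Lemma~\ref{Lemma:full_dimensional} to a hyperplane containing the current span to get escape probability at least $0.1$ per successful output; your affine hyperplane $H_0+{\bf y}$ for ${\bf x}$ is the same step as the paper's footnote, which views ${\bf w}\sim D_{{\mathcal L},\sigma,-{\bf y}}$ against a linear hyperplane. You then count with a negative-binomial/Chernoff bound and pay a factor $n^{c_0}$ for the oracle's success rate, as the paper does. Your single Bernoulli$(0.1\,n^{-c_0})$ stochastic-domination argument is a slightly cleaner packaging than the paper's two-stage count, whose indicator sum $D\le n$ cannot actually have mean $\ge 2n$. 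The conditional-on-success distributional assumption you make explicit is also used implicitly by the paper.
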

\begin{proof}
First remark that ${\bf w}\in {\mathcal{L}}.$
Fix a $k\in {\mathbb{Z}}_{\geq 2}.$ Say, after some calls to algorithm ${
\mathcal{A}}$ we have collected vectors 
${\bf w}_1,\ldots,{\bf w}_{k-1} \in {\mathcal L}$, and let
\[
U_{k-1} = {\rm{span}}( {\bf w}_1,\ldots,{\bf w}_{k-1} )
\qquad
{\rm{is\ such \ that\ }} d_{k-1} = \dim(U_{k-1}) < n.
\]
On the next successful call we obtain ${\bf w}_k \in {\mathcal L}$. $U_{k-1}$ is contained in some hyperplane say $H_{k-1},$ so 
$\Pr[\,{\bf w}_k \in U_{k-1}\,]\le \Pr[\,{\bf w}_k \in H_{k-1}\,].$
Since $\sigma \ge \sqrt{2}\,\eta_{\varepsilon}({\mathcal L})$, by Lemma~\ref{Lemma:full_dimensional}\footnote{Lemma refers to ${\bf x}$ that follows $D_{{\mathcal{L}}+{\bf y},\sigma}.$ However ${\bf w}$ follows same type of distribution with the same variance but different center ${\bf y}'$. In fact, ${\bf w}$ follows $D_{{\mathcal{L}},\sigma,-{\bf y}}$.} we have,
\[
\Pr[\,{\bf w}_k \in H_{k-1}\,] \le 0.9,
\quad\text{and hence}\quad
\Pr[\,{\bf w}_k \in U_{k-1}\,] \le 0.9.
\]
Therefore,
\[\Pr[\,{\bf w}_k \notin U_{k-1}\,] \ge 0.1.\]
Let $X$ be the random variable denoting the number of outputs 
from ${\mathcal A}$ required to obtain $n$ independent vectors. 
Since each step increases the dimension with probability $p$ at least 
$0.1$, the random variable $X$ is a 
negative binomial distribution with parameters $(n,p)$. 
Therefore,
\[
\mathbb{E}[X] = \frac{n}{p} \le \frac{n}{0.1} = 10n.
\]
Hence, the expected number of successful outputs needed to obtain 
$n$ linearly independent lattice vectors is at most say $O(n^2).$

To prove this holds with overwhelming probability, let $N=20n$ be the number of successful outputs collected. Also, let $X_i$ be an indicator variable where $X_i = 1$ if the $i$-th vector 
increases the dimension of the span, and $X_i = 0$ otherwise. 
The dimension of the final span is $D = \sum_{i=1}^{N} X_i$.

But we have ${\mathbb{E}}[X_i] \ge  0.1$. 
We safely have ${\mathbb{E}}[D] \ge 2n$. 
Using the Chernoff bound (second inequality below), we get:
\[
\Pr[\, D < n \,] \le \Pr[\, D < (1 - 1/2){\mathbb{E}}[D] \,] \le e^{-\frac{(1/2)^2 \cdot 2n}{2}} = e^{-n/4}.
\]
Since the algorithm $\mathcal{A}$ 
succeeds with probability at least $1/n^{c_0}$, we only need at most $O(n^2 \cdot n^{c_0})$ total 
calls to ensure we collect $n$ successful vectors with high probability. 
By a union bound, the total failure probability remains negligible.
\end{proof}

\subsection{Siegel's constant}
In the present work we study linear systems over a finite subset of integers. So, we shall need a sufficient condition to ensure the existence of such solutions. We start with Siegel's Lemma.
\begin{theorem}(Siegel's Lemma).\label{Theorem:Siegel-lemma}
Let $U_{M}=\{  x\in {\mathbb{Z}}:|x|\le M\}$ and $A\in U_M^{n\times m},$ where $m>n.$ Then, the linear system $AX={\bf 0}$ has a non-zero integer solution with 
\[\max_{1\le i\le m}\{ |x_i|\}\le \lfloor (mM)^{n/(m-n)}\rfloor.\]   
\end{theorem}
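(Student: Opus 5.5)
The plan is the classical pigeonhole argument behind Siegel's Lemma, applied to the box of integer vectors with nonnegative entries. Set $H=\lfloor (mM)^{n/(m-n)}\rfloor$ and let $\mathcal{X}=\{0,1,\dots,H\}^m$, so $|\mathcal{X}|=(H+1)^m$. I will show that the linear map ${\bf x}\mapsto A{\bf x}$ cannot be injective on $\mathcal{X}$. Given two distinct ${\bf x},{\bf x}'\in\mathcal{X}$ with $A{\bf x}=A{\bf x}'$, the difference ${\bf z}={\bf x}-{\bf x}'$ is then a nonzero integer vector with $A{\bf z}={\bf 0}$ and $\|{\bf z}\|_\infty\le H$, which is exactly the claimed bound.

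To bound the image, fix a row $i$. Let $P_i$ be the sum of the positive entries of row $i$ and $N_i$ the sum of the absolute values of its negative entries, so $P_i+N_i\le mM$. For ${\bf x}\in\mathcal{X}$, the $i$-th coordinate of $A{\bf x}$ lies in the integer interval $[-N_iH,\,P_iH]$, which contains at most $mMH+1$ integers. Hence
\[
|A(\mathcal{X})|\le (mMH+1)^n .
\]
This one-sided splitting is the reason for using the box $\{0,\dots,H\}^m$ rather than the symmetric box $U_H^m$. The symmetric box would give a range of length $2mMH$ and lose a factor $2^n$.

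It therefore suffices to verify $(mMH+1)^n<(H+1)^m$. Since $H+1>(mM)^{n/(m-n)}$, we have $(H+1)^{m-n}>(mM)^n$, and so
\[
(H+1)^m>(mM)^n(H+1)^n=(mMH+mM)^n\ge (mMH+1)^n ,
\]
using $mM\ge 1$. Pigeonhole then gives the collision and completes the proof.

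The main obstacle is this final counting inequality, because a sloppy count (symmetric box, or forgetting the $+1$) gives a worse constant than the stated $\lfloor (mM)^{n/(m-n)}\rfloor$. The choice $H+1>(mM)^{n/(m-n)}$ coming from the floor, together with the nonnegative box, is what makes the estimate tight. One degenerate case also needs a separate check. If $M=0$, then $A=0$, and any standard basis vector is a solution with $\|{\bf z}\|_\infty=1$; but the bound is $\lfloor 0\rfloor=0$, so this case must be excluded, and I will assume $M\ge 1$.
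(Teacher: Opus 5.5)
Your proof is correct: it is the classical pigeonhole (box-principle) argument for Siegel's Lemma, and the paper gives no proof of its own here, simply citing the literature, where this is exactly the standard argument. Your degenerate-case caveat is also right, since the bound genuinely needs $mM\ge 1$, which holds in the paper's application with $M=Q\ge 2$. Your aside on the symmetric box is slightly off, though: any translate of $\{0,\dots,H\}^m$ gives the same count, and the real saving comes from bounding each coordinate of $A\mathbf{x}$ by the one-sided interval $[-N_iH,P_iH]$ rather than by $[-mMH,mMH]$.
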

\begin{proof}
For a proof see \cite{siegel}.
\end{proof}
We call the quantity $c_S(A,M)=(mM)^{n/(m-n)}$ Siegel's constant corresponding to the matrix $A\in U_M^{n\times m}$.  Say, we want to have a solution with \[\max_{1\le i\le m}\{ |x_i|\}\le \beta,\] then it is enough to choose
$\beta > c_S(A,M).$

Now, we consider the following problem. 
We choose a non-singular matrix $B \in \mathbb{Z}^{n \times n}$ with entries bounded by $M$, and define a new $n\times m$ matrix, 
\[
A = [\,{\bf a}_1, \ldots, {\bf a}_m\,], m=m(n),
\]
where each column is given by 
\[
{\bf a}_j = \lfloor Q B^{-1} {\bf y}_j \rfloor,
\]
for some integer $Q$ and vectors ${\bf y}_j.$
We are interested in determining conditions involving $n$, under which the system 
\[
A{\bf X}= {\bf 0}
\]
admits a short nontrivial solution. 

\begin{proposition}\label{Prop:siegel_constant}
    Let $B$ be a matrix from $U_M^{n\times n}$ and ${\mathcal{L}}(B)$ the lattice generated by the columns of $B$. Set   $m = (n + 1)n,$
    and consider the matrix $A$ $\in \mathbb{Z}^{n \times m}$ with columns given by the vectors 
    \[
    {\bf a}_j=\lfloor{QB^{-1}{\bf y}_j}\rfloor~ (j=1,\ldots,m),
    \]
    for some ${\bf y}_j\in P({B})$. Then, \\
    ${\rm (i)}.$ $A\in \mathcal{C}_Q^{n \times m}.$ \\
    ${\rm (ii)}.$ The Siegel's constant $c_S(A,Q)$ is bounded above by $\beta_n = (2n^2Q)^{1/n}.$ 
\end{proposition}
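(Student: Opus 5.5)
For part (i), I would write each ${\bf y}_j\in P(B)$ as ${\bf y}_j=B{\bf Y}_j$ with ${\bf Y}_j\in[0,1)^n$. This is possible because $B$ is non-singular, so the representation is unique. Then $QB^{-1}{\bf y}_j=Q{\bf Y}_j\in[0,Q)^n$, exactly as in the proof of Lemma~\ref{lemma:delta}(i). Taking floors coordinatewise, each entry $\lfloor QY_{j,i}\rfloor$ is an integer with $0\le \lfloor QY_{j,i}\rfloor\le Q-1$. Hence ${\bf a}_j\in\mathcal{C}_Q^n$ and $A\in\mathcal{C}_Q^{n\times m}$. In particular $A\in U_Q^{n\times m}$ (indeed $A\in U_{Q-1}^{n\times m}$), so Siegel's Lemma (Theorem~\ref{Theorem:Siegel-lemma}) applies to $A$ with $M=Q$, and $c_S(A,Q)$ is meaningful.

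For part (ii), I would substitute $m=(n+1)n$ into the definition $c_S(A,Q)=(mQ)^{n/(m-n)}$. Since $m-n=n^2$, the exponent becomes $n/n^2=1/n$, which gives
\[
c_S(A,Q)=\big((n^2+n)Q\big)^{1/n}.
\]
Because $n\ge 1$ we have $n^2+n\le 2n^2$, and the map $t\mapsto t^{1/n}$ is increasing on $t>0$. Therefore $c_S(A,Q)\le (2n^2Q)^{1/n}=\beta_n$.

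Nothing here is a real obstacle. The only point needing care is in (i): we must use ${\bf y}_j\in P(B)$ so that the coordinates $Y_{j,i}$ lie in $[0,1)$, which keeps the floors nonnegative and strictly below $Q$. Afterwards I would remark that, by Siegel's Lemma, $A{\bf X}={\bf 0}$ then has a nonzero integer solution with $\|{\bf X}\|_\infty\le\lfloor\beta_n\rfloor$. I would also note that $\beta_n\to 1$ only if $Q$ is subexponential in $n$; for $Q=2^{O(n)}$, $\beta_n$ is $O(1)$.
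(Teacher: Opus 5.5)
Your proof is correct and matches the paper's argument: you write ${\bf y}_j=B{\bf t}_j$ with ${\bf t}_j\in[0,1)^n$ so that every entry $\lfloor Q t_{j,i}\rfloor$ lies in $\{0,\dots,Q-1\}$, then substitute $m=n^2+n$ to get the exponent $1/n$ and use $m\le 2n^2$. Your non-strict inequality $n^2+n\le 2n^2$ is slightly more careful than the paper's strict $m<2n^2$, which fails at $n=1$; the conclusion $c_S(A,Q)\le\beta_n$ only needs the non-strict form.
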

\begin{proof}
${\rm (i)}.$ Let ${\bf a}_j=(a_{j1},a_{j2},...,a_{jn}).$ Since $\mathbf{y}_j \in P(B)$, there is a $\mathbf{t}_j \in [0,1)^n,$ such that $\mathbf{y}_j = B\mathbf{t}_j.$ Therefore, ${a}_{j,i} = \lfloor Q{t}_{j,i}\rfloor,$ and
$0 \leq{a}_{j,i}  < Q,$
so $A\in  \mathcal{C}_Q^{n \times m}.$\\\\
${\rm (ii)}.$ Since $\frac{n}{m-n} = \frac{1}{n}$ and $m<2n^2$ we get,
\[
c_S(A,Q) = (mQ)^{n/(m-n)}
< ( 2 n^2 Q)^{1/n}.
\]
 If we set $\beta_n = (2Qn^2)^{1/n}$ the result follows.
\end{proof}
\begin{remark}
    Since $\beta_n=(2n^2Q)^{1/n}$ and $Q=\mathrm{poly}(n)$, we have $\beta_n\to 1$; hence for all $n\ge 5$ we may fix $\beta=2$ and therefore treat $\beta=O(1)$ throughout.
\end{remark}

\begin{corollary}\label{cor:beta}
   Let $\beta$ be a positive integer independent of $n$. Using the terminology of the previous Proposition, the linear system $A{\bf X}={\bf 0}$ has a solution with 
    $\max_{1\le i \le m} \{| {x}_i|\}\le \beta$  if $\beta_n < \beta.$
\end{corollary}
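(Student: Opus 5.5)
This is a direct combination of Theorem~\ref{Theorem:Siegel-lemma} with Proposition~\ref{Prop:siegel_constant}. Fix $B\in U_M^{n\times n}$, $m=(n+1)n$, and the matrix $A=[{\bf a}_1,\dots,{\bf a}_m]$ with ${\bf a}_j=\lfloor QB^{-1}{\bf y}_j\rfloor$.

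First, by Proposition~\ref{Prop:siegel_constant}(i), $A\in\mathcal{C}_Q^{n\times m}$. Every entry lies in $\{0,\dots,Q-1\}$, so $A\in U_Q^{n\times m}$. This lets us apply Theorem~\ref{Theorem:Siegel-lemma} with bound $M=Q$, which is legitimate because $m>n$.

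Second, Theorem~\ref{Theorem:Siegel-lemma} gives a nonzero ${\bf x}\in\mathbb{Z}^m$ with $A{\bf x}={\bf 0}$ and
\[
\max_i|x_i|\le \lfloor (mQ)^{n/(m-n)}\rfloor=\lfloor c_S(A,Q)\rfloor .
\]
By Proposition~\ref{Prop:siegel_constant}(ii), and since $n/(m-n)=1/n$ and $m<2n^2$, we get $c_S(A,Q)<\beta_n$. The hypothesis $\beta_n<\beta$ then yields
\[
\max_i|x_i|\le c_S(A,Q)<\beta,
\]
so the required short solution exists. The floor only helps here: since $\beta$ is an integer and $c_S<\beta$, we even get $\max_i|x_i|\le\beta-1$.

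Nothing here is a real obstacle. The only thing to check is that the hypotheses of Siegel's Lemma are satisfied. The entry bound should be $Q$ rather than $M$, because $A$ has entries in $\mathcal{C}_Q$, not in $U_M$. We also need $m>n$, which holds since $m=n^2+n$.
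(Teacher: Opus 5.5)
Your proof is correct and is the argument the paper intends (the paper states the corollary without a separate proof): apply Siegel's Lemma with entry bound $Q$, using $A\in\mathcal{C}_Q^{n\times m}\subseteq U_Q^{n\times m}$ and $m=n^2+n>n$, and combine it with $c_S(A,Q)\le\beta_n<\beta$ from Proposition~\ref{Prop:siegel_constant}(ii). One small point: the strict inequality $m<2n^2$ needs $n\ge 2$ (for $n=1$ you get $c_S=\beta_1$), but your argument only uses $c_S(A,Q)\le\beta_n$, so nothing changes.
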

\section{Solving ${\rm{SIVP}}_{\tilde{O}(n^{1.5})}$ with an ${\rm{SIS}}_{\mathbb{Z}}$-oracle ${\mathcal{O}}$}\label{Sec:reduction}
In this section we build a {\it{Short Vectors}} algorithm (Algorithm \ref{alg:alg1}) that uses an ${\rm{SIS}}_{\mathbb{Z}}$ oracle in order to construct our reduction to SIVP, i.e. given an oracle that solves ${\rm{SIS}}_{\mathbb{Z}}$ we solve SIVP with approximation factor $\tilde{O}(n^{1.5})$ for any lattice.

Let ${\mathcal{L}}$ be a full rank lattice of dimension $n$ and $\eta_{\varepsilon}({\mathcal{L}})$ the smoothing parameter for 
the negligible function $\varepsilon=\frac{1}{n^{\log_2{n}}}.$ 
Let $\tilde{\eta}$ be a real number in the interval $[2\eta_{\varepsilon}({\mathcal{L}}),4\eta_{\varepsilon}({\mathcal{L}})].$ We consider a basis of ${\mathcal{L}},$ say $\mathcal{B}=\{{\bf b}_1,...,{\bf b}_n\},$ $B$ the matrix with columns the vectors of ${\mathcal{B}},$ and let $P({B})$ be the fundamental parallelepiped of $\mathcal{B}.$ 

We assume that we can solve ${\rm{SIS}}_{\mathbb{Z}}(n, m, \beta, {\mathcal{C}}_Q )$, so for $A$ randomly chosen from $ {\mathcal{C}}_Q^{n\times m}$, in time $O(poly(m,n))$ we can find an integer solution with the $\ell_{\infty}$-norm, at most $\beta.$ We assume that $m=poly(n)$, so there is some constant $c_0>0$ such that $O\big(poly(n,m(n))\big)=O(n^{c_0}).$

We shall use the following Algorithm \ref{alg:alg1} in order to construct the worst-to-average case reduction. In Theorem \ref{theorem:basic} we show the correctness of Algorithm \ref{alg:alg1}, and how to use it to construct our reduction.

\begin{algorithm}[H]
\caption{Short Vectors}\label{alg:alg1}
\begin{algorithmic}[1]
\REQUIRE Let $n,M$ positive integers. We consider a matrix $B \in U_M^{n\times n}$ with columns linearly independent. ${\mathcal{L}}$ is the lattice generated by the columns of $B.$ We set $m=(n+1)n$ and $\beta=O(1)$. Also, let $\tilde{\eta}$ be a real positive number in $[2\eta_{\varepsilon}({\mathcal{L}}),4\eta_{\varepsilon}({\mathcal{L}})].$.\\
\ENSURE A short non zero lattice vector ${\bf v}$ or fail. \\
\STATE Choose $m-$vectors ${\bf x}_i\xleftarrow{\$} {D_{\tilde{\eta}}}$  $(i=1,2,\dots,m).$
\STATE Set ${\bf y}_i$ such that ${\bf y}_i\equiv {\bf x}_i\pmod{P(B)}$ $(i=1,2,\dots,m).$
\STATE Set $Q=\lceil n\sqrt{m}M\rceil.$ Let $A$ be the $n\times m$ matrix with columns ${\bf a}_i = \lfloor QB^{-1}{\bf y}_i \rfloor$ $(i=1,2,\dots,m).$
\STATE With (non-negligible) probability at least $1/n^{c_0}$ we solve $A{\bf X}={\bf 0}$ and say ${\bf r}=(r_1,\dots,r_m)\in {\mathbb{Z}}^m$ be a solution with 
$\|{\bf r}\|_{\infty}\le \beta.$
\STATE If the previous step succeeds, the algorithm returns the lattice vector 
${\bf v}=\sum_{j=1}^{m}r_j({\bf y}_j-{\bf x}_j).$ Else, it returns fail.
\end{algorithmic}
\end{algorithm}

\begin{remark}
    In step 4 we assume the existence of an oracle, which on input a random integer matrix from $ {\mathcal{C}}_Q^{n\times m}$ for some $M$, outputs an integer solution with $\ell_{\infty}$-norm at most $\beta$ (with non-negligible probability). In order to use this oracle, we must show that $A$ constructed in step 3 is uniform. This is proved in Theorem \ref{theorem:basic}(iii).
\end{remark}


 \noindent \textbf{Algorithmic Differences: SIS$_q$ vs. SIS$_\mathbb{Z}$.} In the original reduction, as provided by Ajtai, the modulus $q$ is used to define the SIS$_q$ instance to induce collisions in $\mathbb{Z}^n_q$. In our algorithm, we define the parameter $Q$, whose only purpose is to help us bound the length of the output vector $\mathbf{v}$. That is, by choosing  $Q=poly(n)$, we get the bound $\tilde{O}(n^{1.5} \lambda_n(\mathcal{L})).$ These two parameters, $q,Q$ in ${\rm{SIS}}_q, {\rm{SIS}}_{\mathbb{Z}}$ (resp.), serve different roles and they should not be conflated.
In the original ${\rm{mod}} q$ reduction, the $z_i = \frac{1}{q}B\mathbf{a}_i$ vectors are computed as part of the algorithm and are used to determine the output vector. Our reduction algorithm does not require them. Finally, we use a different SIS oracle, which solves the SIS over $\mathbb{Z},$ not the SIS over $\mathbb{Z}_q.$
\ \\

Parts (i) and (iii) of the Theorem below, establish the correctness of Algorithm \ref{alg:alg1}. Part (ii) gives an upper bound on the Euclidean norm of the output lattice vector, and in  (iv) we show that with overwhelming probability after polynomially many calls to Algorithm \ref{alg:alg1} we get $n$ independent vectors of ${\mathcal{L}}.$
\begin{theorem}\label{theorem:basic}
$({\bf i}).$ If Algorithm \ref{alg:alg1} does not fail, the output ${\bf v}$ is a lattice vector.\\
$({\bf ii}).$ If the oracle ${\mathcal{O}}$, on input $A$ (from step 3 of Algorithm \ref{alg:alg1}), succeeds in ${\rm{SIS}}_{\mathbb{Z}}(n, m, \beta,  {\mathcal{C}}_Q ),$ we get  
$$\| {\bf v}\| \le_p \tilde{O}(n^{1.5} \lambda_n({\mathcal{L}})),$$ and the inequality holds with probability $p> 1 - 2^{-\Omega(n)}.$\\ 
$({\bf iii}).$ The matrix $A$, of step 3 of Algorithm \ref{alg:alg1}, is statistically close to the uniform distribution on
${\mathcal C}_Q^{\,n\times m}$, where ${\mathcal C}_Q=\{0,1,\dots,Q-1\}$. More precisely,
\[
\Delta\big(A,{\mathcal U}({\mathcal C}_Q^{\,n\times m})\big)\le \frac{m\varepsilon}{2}\ (\varepsilon=n^{-\log_2{n}}).
\]
In particular, for large enough $n
$, Algorithm \ref{alg:alg1} on input $A$ succeeds with probability  at least
$\frac{1}{2n^{c_0}}.$\\

$({\bf iv}).$ After at most $O(n^{c_0+2})$ calls to the Algorithm \ref{alg:alg1} (for fixed $\tilde{\eta}$) we get $n$ independent vectors of ${\mathcal{L}}$.
\end{theorem}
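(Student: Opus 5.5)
We prove the four parts in order; (ii) carries the most delicate bookkeeping. For (i), note that ${\bf y}_j\equiv {\bf x}_j \pmod{P(B)}$ means ${\bf y}_j-{\bf x}_j\in{\mathcal L}$ for every $j$. Since ${\bf r}\in{\mathbb Z}^m$, the vector ${\bf v}=\sum_j r_j({\bf y}_j-{\bf x}_j)$ is an integer combination of lattice vectors, so ${\bf v}\in{\mathcal L}$. The identity $A{\bf r}={\bf 0}$ is not needed here; it is used only in (ii).

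For (ii), write ${\bf y}_j=B{\bf Y}_j$ and ${\bf z}_j=\frac1Q B{\bf a}_j$, as in Lemma~\ref{lemma:delta}. Since $A{\bf r}={\bf 0}$, we have $\sum_j r_j{\bf z}_j=\frac1Q BA{\bf r}={\bf 0}$, and therefore
\[
{\bf v}=\sum_j r_j({\bf y}_j-{\bf z}_j)-\sum_j r_j{\bf x}_j .
\]
\emph{First term.} Lemma~\ref{lemma:delta}(ii) gives $\|{\bf y}_j-{\bf z}_j\|\le n\sqrt n M/Q$. With $\|{\bf r}\|_\infty\le\beta$ and $Q=\lceil n\sqrt m M\rceil$, the first term has norm at most $m\beta n\sqrt nM/Q\le \beta\sqrt{mn}=O(n^{1.5})$.

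\emph{Second term.} Corollary~\ref{Corollary:tail_bound} bounds it by $\tilde\eta\,\beta\sqrt{2nm}$ with probability $1-2^{-\Omega(n)}$. Corollary~\ref{cor:smoothing-bound} gives $\tilde\eta\le 4\eta_\varepsilon<8\lambda_n\log_2 n$. With $m=n(n+1)$ and $\beta=O(1)$, this is $\tilde O(n^{1.5}\lambda_n)$.

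\emph{Main obstacle.} The first term is bounded by $O(n^{1.5})$ in absolute units, not in units of $\lambda_n$, so it must be absorbed. Since ${\mathcal L}\subseteq{\mathbb Z}^n$, every nonzero lattice vector has norm at least $1$, so $\lambda_n\ge1$. Hence $O(n^{1.5})\le O(n^{1.5}\lambda_n)$, and the two terms together give the claimed bound. The factor $M$ cancels only because $Q$ scales with $M$; note that $Q$ is then polynomial in $M$, and this is harmless since the oracle runs in time polynomial in the input size.

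For (iii), Proposition~\ref{Prop:smooth_parameter} with $\tilde\eta\ge\eta_\varepsilon$ shows that each ${\bf y}_i$ is within statistical distance $\varepsilon/2$ of ${\mathcal U}(P(B))$. The map ${\bf y}\mapsto\lfloor QB^{-1}{\bf y}\rfloor$ sends ${\mathcal U}(P(B))$ to the uniform distribution on ${\mathcal C}_Q^n$, because ${\bf Y}=B^{-1}{\bf y}$ is uniform on $[0,1)^n$ and each coordinate $\lfloor QY_i\rfloor$ is uniform on ${\mathcal C}_Q$. Lemma~\ref{Lemma:properties_of_stat_dist}(ii) then bounds the distance of each column ${\bf a}_i$ from uniform by $\varepsilon/2$, and part (i) of the same lemma, using independence of the columns, gives the total bound $m\varepsilon/2$. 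Since $m\varepsilon=n(n+1)n^{-\log_2 n}$ is negligible, the oracle's success probability on $A$ is at least $n^{-c_0}-m\varepsilon/2\ge \frac{1}{2n^{c_0}}$ for large $n$.

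For (iv), we apply Lemma~\ref{lemma:indepedent_vectors}. Conditioned on the oracle's output, each ${\bf x}_j$ given ${\bf y}_j$ follows $D_{{\mathcal L}+{\bf y}_j,\tilde\eta}$, and $\tilde\eta\ge2\eta_\varepsilon\ge\sqrt2\eta_\varepsilon$. The oracle sees only $A$, which is a function of the ${\bf y}_j$, so the ${\bf x}_j-{\bf y}_j$ keep this conditional distribution. Pick an index $j$ with $r_j\neq0$ and condition on all the other ${\bf x}_k$. Then ${\bf v}$ lies in a fixed hyperplane only if ${\bf x}_j$ lies in a translate of that hyperplane, which by Lemma~\ref{Lemma:full_dimensional} happens with probability below $0.9$. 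The same 0.1-progress and Chernoff argument as in Lemma~\ref{lemma:indepedent_vectors} then applies with success probability $1/(2n^{c_0})$, giving $O(n^{c_0+2})$ calls. Nonzeroness of ${\bf v}$ comes from the same anti-concentration, since $\{{\bf 0}\}$ lies in every hyperplane.
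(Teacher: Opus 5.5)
Your argument follows the paper's proof part by part, and there is one genuine gap, in the second term of (ii), which the paper's proof shares. For orientation: (i) and (iii) are the same as the paper's. (ii) uses the same decomposition ${\bf v}=\sum_j r_j({\bf y}_j-{\bf z}_j)-\sum_j r_j{\bf x}_j$, and you make explicit the bound $\lambda_n(\mathcal L)\ge 1$ that the paper uses silently. In (iv) you are more careful than the paper. Lemma~\ref{lemma:indepedent_vectors} is stated for a single vector ${\bf x}-{\bf y}$, whereas conditioning on all ${\bf x}_k$ with $k\neq j$ (for some $r_j\neq 0$) actually handles the combination $\sum_j r_j({\bf y}_j-{\bf x}_j)$.

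The gap is this. Corollary~\ref{Corollary:tail_bound}, via Lemma~\ref{lemma:linear_combination}, is valid only for a coefficient vector fixed independently of the ${\bf x}_j$. Here ${\bf r}$ is the oracle's answer on $A$, whose columns are $\lfloor QB^{-1}({\bf x}_j \bmod P(B))\rfloor$. So ${\bf r}$ depends on the ${\bf x}_j$, and $\sum_j r_j{\bf x}_j$ need not be Gaussian. The easy repairs lose the claimed factor:
\begin{itemize}
\item bounding each $\|{\bf x}_j\|$ and using the triangle inequality gives only $\beta m\sqrt{2n}\,\tilde\eta=\tilde O(n^{2.5}\lambda_n(\mathcal L))$;
\item a union bound over all $(2\beta+1)^m$ candidate vectors ${\bf r}$ forces a deviation of order $\sqrt m$ in the Gaussian tail and gives only $\tilde O(n^{2}\lambda_n(\mathcal L))$.
\end{itemize}

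The missing ingredient is the conditioning you already use in (iv). Given ${\bf y}_1,\dots,{\bf y}_m$ and ${\bf r}$, the ${\bf x}_j$ are independent with ${\bf x}_j\sim D_{\mathcal L+{\bf y}_j,\tilde\eta}$. So ${\bf r}$ can be treated as fixed, but the summands are now discrete Gaussians on cosets rather than samples of $D_{\tilde\eta}$, and you need a tail bound for such sums. Two options:
\begin{itemize}
\item Use that $D_{\mathcal L+{\bf c},s}$ is, up to a negligible error, subgaussian with parameter $s$ whenever $s\ge\eta_\varepsilon(\mathcal L)$. This yields $\|\sum_j r_j{\bf x}_j\|=O(\sqrt n\,\tilde\eta\,\|{\bf r}\|)$ with probability $1-2^{-\Omega(n)}$, which is the stated $\tilde O(n^{1.5}\lambda_n(\mathcal L))$.
\item Use the Micciancio--Regev estimates for the first and second moments of $D_{\mathcal L,s,{\bf c}}$ (valid since $\tilde\eta\ge 2\eta_\varepsilon(\mathcal L)$) together with Markov's inequality. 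This gives the same bound but only with constant probability. That suffices for the reduction, but not for the probability claimed in (ii).
\end{itemize}
Neither fact is among the lemmas collected in the paper, so one of them has to be added for (ii) to be complete.
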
  
\begin{proof}

$({\bf i}).$ By definition of reduction modulo $P(\mathcal B)$, since ${\bf y}_j\equiv {\bf x}_j \pmod{P(\mathcal B)}$,
there exist real numbers $\kappa_{ji}$ $(1\le j\le m, 1\le i\le n)$ such that 
\[ {\bf y}_j-{\bf x}_j=\sum_{i=1}^{n}{\bf b}_i(\kappa_{ji}-\lfloor{\kappa_{ji}}\rfloor)-\sum_{i=1}^{n}\kappa_{ji}{\bf b}_i=-\sum_{i=1}^{n}\lfloor \kappa_{ji}\rfloor{\bf b}_i\in {\mathcal{L}}
\]
for every $j=1,2,...,m$. So ${\bf v}\in {\mathcal{L}}.$\\
$({\bf ii}).$ We set ${\bf z}_j=\frac{1}{Q}B{\bf a}_j$, for $j\in\{1,2,...,m\},$ where with $B$ we denote the matrix with columns ${\bf b}_1,...,{\bf b}_n$. 
From step 4 of the Algorithm \ref{alg:alg1} we get a solution ${\bf r}=(r_1,...,r_m)$ of $A{\bf X}={\bf 0}$ with $\max\{|{r}_j|\}\leq\beta.$ 
Note that
\[ 
\sum_{j=1}^{m}r_j{\bf z}_j=\frac{1}{Q}\sum_{j=1}^{m}r_jB{\bf a}_j=\frac{1}{Q}B\sum_{j=1}^{m}r_j{\bf a}_j={\bf 0}.
\]
Then,
\[ {\bf v}=\sum_{j=1}^{m} r_j({\bf y}_j-{\bf x}_j) - \sum_{j=1}^{m}r_j{\bf z}_j=-\sum_{j=1}^{m}r_j{\bf x}_j+\sum_{j=1}^m r_j({\bf y}_j-{\bf z}_j).\]
Thus,
\begin{equation}\label{eq2}
\| {\bf v}\| \le  \big\|\sum_{j=1}^{m}r_j {\bf x}_j\big\| + \max_{1\le j\le m}|r_j|\sum_{j=1}^{m}\|{\bf y}_j-{\bf z}_j\|.  
\end{equation}
 From Corollary \ref{Corollary:tail_bound}, we get $\|\sum_{j=1}^{m} r_j {\bf x}_j\|\le_p \sqrt{2nm}\beta \tilde{\eta}$ for $p$ close to 1 (in fact $p>1-2^{-\Omega(n)})$. Furthermore, from Lemma \ref{lemma:delta} (ii) we have $\|{\bf y}_j-{\bf z}_j\|\le \frac{n\sqrt{n}M}{Q}.$
So relation (\ref{eq2}) gives,
\begin{equation}\label{equation:select_q}
    \| {\bf v}\| \le_p \beta\Big(\tilde\eta\sqrt{2nm} + \frac{Mmn\sqrt{n}}{Q}\Big).
\end{equation} 
Since,
$Q=\lceil n\sqrt{m}M\rceil\ge n\sqrt{m}M$, we get 
\[ 
\frac{mn\sqrt{n}M}{Q}\le \sqrt{nm}\leq \sqrt{nm}\lambda_n({\mathcal{L}}),
\]
and $\tilde{\eta}\le 4\eta_{\varepsilon}({\mathcal{L}})< 8(\log_2{n})\lambda_n({\mathcal{L}})$ (from Corollary \ref{cor:smoothing-bound}), thus
\begin{equation}\label{complexity_inequality}
    \| {\bf v}\| \le_p 16\sqrt{2}\beta \sqrt{nm} (\log_2{n}) \lambda_n({\mathcal{L}}).
\end{equation}
Finally, $m=O(n^2)$ and $\beta=O(1),$ so
\begin{equation}\label{equation_m}
    \| {\bf v}\| \le_p  \tilde{O}(n^{1.5} \lambda_n({\mathcal{L}})).
\end{equation}  

$({\bf iii}).$ We must prove that the input to the oracle is statistically close to uniform. I.e. we have to prove that ${\bf a}_i = \lfloor QB^{-1}{\bf y}_i \rfloor$ $(i=1,2,\dots,m)$ of step 3 of the Algorithm \ref{alg:alg1} follows a distribution that is statistically close to the uniform distribution.
Let $g({\bf y})=\lfloor QB^{-1}{\bf y}\rfloor,$ ${\bf y}\in P(B).$ Since ${\bf y} \in P(B)$, we have $B^{-1}{\bf y} \in [0,1)^n$. Therefore
\[
g({\bf y})=\lfloor Q B^{-1}{\bf y} \rfloor \in \{0,\ldots,Q-1\}^n.
\] 
We set ${\mathcal{C}}_Q=\{0,1,...,Q-1\}.$ So, $g:P(B)\rightarrow {\mathcal{C}}_Q^{n}.$
If $\mathbf y$ is uniformly distributed over $P(B)$, then the random variable
$g(\mathbf y)$ is uniformly distributed on ${\mathcal{C}}_Q^n.$ Indeed, since $Q\ge 2$ and integer we get $Pr_{{\bf y}\sim {\mathcal{U}(P(B))}}[g({\bf y})={\bf z}]=1/Q^n, $ for all ${\bf z}\in {\mathcal{C}}_Q^n.$  We denote by \(\mathcal{U}(W)\) the uniform distribution over a finite set \(W\). Thus,
$$g\big({\mathcal{U}}(P(B))\big)={\mathcal{U}}({\mathcal{C}}_Q^n).$$
By the data-processing property of statistical distance, replacing $\mathbf y$
by a distribution that is close to uniform over $P(B)$ changes the
distribution of $g(\mathbf y)$ by at most the same statistical distance.

Then, the statistical distance
\[\Delta\Big( \big(g({\bf y}_1),...,g({\bf y}_m)\big), \big({\mathcal{U}}({\mathcal{C}}_{Q}^n)\big)^{m}\Big)\underset{{\rm{Lemma\ }} \ref{Lemma:properties_of_stat_dist} ({\rm i})}{\leq} \]
\[\le \sum_{i=1}^{m}\Delta\big(g({\bf y}_i),{\mathcal{U}}({\mathcal{C}}_{Q}^n)\big)\underset{\rm{the\ previous\ remark}}{=} 
\sum_{i=1}^{m}\Delta\Big(g({\bf y}_i),g\big({\mathcal{U}}\big(P({{B}})\big)\big)\Big)\underset{{\rm{Lemma\ }} \ref{Lemma:properties_of_stat_dist} ({\rm ii})}{\leq} \] 
\[\le \sum_{i=1}^{m}\Delta\big({\bf y}_i,{\mathcal{U}}(P({{B}}))\big)=m\Delta\big(D_{\tilde{\eta}}{\ \rm{mod}} P({{B}}),{\mathcal{U}}(P({{B}}))\big)\underset{{\rm{Proposition}} \ \ref{Prop:smooth_parameter}}{\le} m\varepsilon/2.\]
We used the same notation $\Delta$ for the total variation distance
on arbitrary probability spaces, both finite and continuous. Finally, since $\varepsilon=n^{-\log_2{n}}$ and $m=O(n^2)$ we get that the last quantity {$m\varepsilon/2=O(n^2/n^{\log_2{n}})$} is negligible. So the input to the oracle ${\mathcal{O}}$ is statistically close to the uniform distribution on ${\mathcal{C}}_Q^{n\times m}$. 

Now, on input our matrix $A$ to the oracle, we get that the success probability is at least 
\[1/n^{c_0}-m\varepsilon/2>1/2n^{c_0} \Leftrightarrow \varepsilon < \frac{1}{2mn^{c_0}},\] where the inequality holds for large $n.$

$({\bf iv}).$ To get $n$ independent vectors of ${\mathcal{L}}$ after polynomially many calls to Algorithm \ref{alg:alg1} we need to prove that the output vector follows the shifted discrete Gaussian and then, apply Lemma \ref{lemma:indepedent_vectors} of the Appendix.
 We reverse the order of steps in Algorithm \ref{alg:alg1} as  (2)-(3)-(4)-(1)-(5). Namely, we first invoke the oracle in step 5  yielding a short integer vector ${\bf r} = (r_1,\ldots,r_m)$ with $\|{\bf r}\|_{\infty} \le \beta$ satisfying $A{\bf r} = {\bf 0}$. Then, the vector ${\bf x}_i$ is drawn as in step 1 but conditional to ${\bf y}_i\equiv {\bf x}_i\pmod{P({B})}.$ So
\[
{\bf v} = \sum_{i=1}^{m} r_i({\bf y}_i - {\bf x}_i)
\]
has the same distribution as in Algorithm \ref{alg:alg1}.
I.e. we apply the steps of Algorithm \ref{alg:alg1} with the following order:\\
$({\bf 1}).$ Choose ${\bf y}_i$ such that ${\bf y}_i\equiv D_{\mathcal{L},\tilde{\eta}}\pmod{P({B})}$ $(i=1,2,\dots,m).$\\
$({\bf 2}).$ Set $A$ be the matrix with columns ${\bf a}_i = \lfloor QB^{-1}{\bf y}_i \rfloor$ $(i=1,2,\dots,m).$\\
$({\bf 3}).$ With (non-negligible) probability $1/n^{c_0}$ we solve $AX={\bf 0}$ and say ${\bf r}=(r_1,\dots,r_m)\in {\mathbb{Z}}^m$ be a solution with 
$\|{\bf r}\|_{\infty}\le \beta.$\\
$({\bf 4}).$ Choose $m-$vectors ${\bf x}_i\xleftarrow{\$} {D_{{\mathcal{L}}+{\bf y}_i,\tilde{\eta}}}$  $(i=1,2,\dots,m).$\\
$({\bf 5}).$ The algorithm returns the lattice vector 
${\bf v}=\sum_{i=1}^{m}r_i({\bf y}_i-{\bf x}_i).$ \\\\
That is, we changed the steps of the original algorithm. In both algorithms the outputs follow the same distribution. Since now we now know the distribution of the outputs of Algorithm \ref{alg:alg1} we can apply Lemma \ref{lemma:indepedent_vectors} in order to get $n$ independent vectors of  ${\mathcal{L}}.$ 
\end{proof}
We proved Theorem \ref{theorem}. We now describe an explicit algorithm to choose the parameter $\tilde{\eta}$ within the interval \([2\eta_{\varepsilon}(\mathcal{L}), 4\eta_{\varepsilon}(\mathcal{L})]\). 

\begin{theorem}
    The  number of calls for Algorithm \ref{alg:alg1} in the reduction is at most 
    $\tilde{O}(n^{3+c_0}).$
\end{theorem}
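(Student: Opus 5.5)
We need to bound the number of calls to Algorithm~\ref{alg:alg1} made by the whole reduction, including the search for a suitable $\tilde{\eta}\in[2\eta_{\varepsilon}(\mathcal{L}),4\eta_{\varepsilon}(\mathcal{L})]$. The plan is to write this total as (number of candidate values of $\tilde\eta$) $\times$ (calls per candidate, from Theorem~\ref{theorem:basic}(iv)). The value $\eta_{\varepsilon}(\mathcal{L})$ is unknown, but it is trapped in an interval of polynomial ratio. By Lemma~\ref{Lemma:smoothing-bound2} and Corollary~\ref{cor:smoothing-bound},
\[
\tfrac{1}{n}\lambda_n(\mathcal{L})<\eta_{\varepsilon}(\mathcal{L})<2\lambda_n(\mathcal{L})\log_2 n .
\]
Next we replace the unknown $\lambda_n$ by a computable quantity. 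Running LLL on $B$ gives vectors $\mathbf{b}_1,\dots,\mathbf{b}_n$. Lemma~\ref{Lemma:lll-vectors} with $i=n$ gives $\lambda_n(\mathcal{L})\le \max_j\|\mathbf{b}_j\|=:\Lambda\le 2^{(n-1)/2}\lambda_n(\mathcal{L})$. Hence
\[
\frac{\Lambda}{n\,2^{(n-1)/2}}<\eta_{\varepsilon}(\mathcal{L})<2\Lambda\log_2 n .
\]
The ratio of the endpoints is $2^{O(n)}\cdot n\log n$.

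We then enumerate the guesses $\tilde\eta_k=2^k\cdot \Lambda/(n2^{(n-1)/2})$ for $k=0,1,\dots,K$, with $K=\lceil\log_2(\text{ratio})\rceil+2=O(n)$. Consecutive guesses differ by a factor of $2$, and the grid covers the whole interval. Therefore at least one $k$ satisfies $\tilde\eta_k\in[2\eta_{\varepsilon},4\eta_{\varepsilon}]$.

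For each $k$ we run Algorithm~\ref{alg:alg1} the number of times prescribed by Theorem~\ref{theorem:basic}(iv), namely $O(n^{c_0+2})$ times, and keep any $n$ linearly independent lattice vectors obtained. Among all candidate sets we output the one with the smallest maximal length. Every output is a lattice vector by Theorem~\ref{theorem:basic}(i), so choosing the shortest candidate set is always safe. For the good $k$, parts (ii) and (iv) guarantee a set of length $\tilde O(n^{1.5}\lambda_n)$, so the minimum over all candidates is at most this. The total number of calls is therefore
\[
O(n)\cdot O(n^{c_0+2})=O(n^{c_0+3})\subseteq\tilde O(n^{3+c_0}).
\]

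The main obstacle is the size of the guessing range. Because LLL only approximates $\lambda_n$ within $2^{(n-1)/2}$, the search is over $O(n)$ scales rather than $O(\log n)$, and this is exactly what produces the extra factor of $n$. A second point needs checking: for the wrong guesses, the oracle's input may be far from uniform, so nothing is guaranteed there. This does no harm, since those runs merely cost calls and we only rely on the good guess; the $\tilde O$ absorbs polylogarithmic slack, e.g.\ if a finer grid is used.
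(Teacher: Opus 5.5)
Your proof is correct and follows essentially the same route as the paper. An LLL-reduced basis gives a computable proxy for $\lambda_n(\mathcal{L})$, and the bounds $\lambda_n(\mathcal{L})/n<\eta_{\varepsilon}(\mathcal{L})<2\lambda_n(\mathcal{L})\log_2 n$ then confine $\eta_{\varepsilon}(\mathcal{L})$ to an interval of ratio $2^{O(n)}$. A factor-$2$ grid of $O(n)$ candidate values for $\tilde{\eta}$ therefore hits $[2\eta_{\varepsilon}(\mathcal{L}),4\eta_{\varepsilon}(\mathcal{L})]$, and each candidate costs $O(n^{c_0+2})$ calls by Theorem~\ref{theorem:basic}(iv).

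The differences are cosmetic. You grid upward from the lower endpoint using the sharper $2^{(n-1)/2}$ LLL factor, whereas the paper halves downward from $\widehat{\eta}=2R\log_2 n$ with a cruder $2^{n}$ factor. Your rule of keeping the shortest independent set across all guesses also makes explicit a selection step that the paper leaves implicit.
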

\begin{proof}
   Algorithm~\ref{alg:alg1} requires a Gaussian parameter $\tilde{\eta}$ that is within
a constant factor of the smoothing parameter $\eta_{\varepsilon}(\mathcal{L})$
(for $\varepsilon=n^{-\log_2 n}$).

Compute an LLL-reduced basis $B_{\mathrm{LLL}}$ of $\mathcal{L}$ and let $R = \max_{1\le i\le n}\bigl\|{\bf b}^{\mathrm{LLL}}_i\bigr\|.$
Since $B_{\mathrm{LLL}}$ is a basis of $\mathcal{L}$, it contains $n$ linearly independent lattice vectors of norm at most $R$, hence $\lambda_n(\mathcal{L}) \le R$.  Together with Corollary~\ref{cor:smoothing-bound}, 
$\eta_{\varepsilon}(\mathcal{L}) \le 2\lambda_n(\mathcal{L})\log_2 n$. 
We set
\[
\widehat{\eta} = 2R\log_2 n
\]
satisfying $\eta_{\varepsilon}(\mathcal{L}) \le \widehat{\eta}.$

On the other hand, it is standard that (see Lemma \ref{Lemma:lll-vectors}) $R$ is upper-- bounded by $2^{n}\lambda_n(\mathcal{L})$. From Lemma \ref{Lemma:smoothing-bound2}, we get $\lambda_n<n\eta_{\varepsilon}({\mathcal{L}})$
so $\widehat{\eta}< 2^{n}{\rm{poly}}(n)\eta_{\varepsilon}({\mathcal{L}}).$

Thus we get,
\[\eta_\varepsilon({\mathcal{L}}) \le \widehat{\eta}\le \alpha\,
\eta_{\varepsilon}(\mathcal{L})\] for some $\alpha \le 2^{n}{\rm{poly}}(n)$.
Define the candidate values
\[
\tilde{\eta}_k = 2^{-k}\widehat{\eta}
\qquad (k=0,1,\ldots,K=\lceil \log_2{\alpha}\rceil +2),
\]
where $K= O(n+\log n)=\tilde{O}(n)$.
Then, there exists $k^\star\in\{0,\ldots,K\}$ such that,
\[
2\,\eta_{\varepsilon}(\mathcal{L}) \le \tilde{\eta}_{k^\star} \le
4\,\eta_{\varepsilon}(\mathcal{L}).
\]
  Since we have to run Algorithm \ref{alg:alg1} at most $n^{c_0+2}$ times for each of the $\tilde{O}(n)$ selections of $\tilde{\eta},$ the total calls are at most $\tilde{O}(n^{3+c_0}).$
\end{proof}

\section{Conclusion}\label{sec:conclusion}

In the present paper, we establish a worst-case to average-case connection for
$\mathrm{SIS}_{\mathbb Z}$. More precisely, we show that if random instances of
$\mathrm{SIS}_{\mathbb Z}$ can be solved in polynomial time with
non-negligible probability, then $\mathrm{SIVP}$ can be approximated in polynomial
time within a factor $\widetilde{O}(n^{3/2})$ in the worst case.

In any case, introducing new average-case problems with provable reductions 
to hard lattice problems can further advance our understanding of emerging post-quantum cryptography.  

As a next step, we plan to study a three-move identification scheme in the
Schnorr \& Lyubashevsky paradigm (see \cite{Vadim}; see also \cite{rizos}).

\appendix\label{A}
\section{}\label{sisz_sisq}
\textbf{Uniformity.} 
We check whether a matrix drawn from
\({\mathcal{C}}_Q^{n\times m}\) (entries i.i.d.\ uniform on \({\mathcal{C}}_Q=\{0,1,...,Q-1\}\)) remains
uniform modulo \(q\). 

Consider first the one-dimensional case \(n=m=1\).
Let \(f(x)=x \bmod q\) and let \(X\) be uniform on \({\mathcal{C}}_Q\); set \(Y=X\bmod q\).
Then for any \(a\in\mathbb Z_q\),
\[
\Pr[Y=a]
=\frac{\bigl|\{x\in {\mathcal{C}}_Q:\ x\equiv a \ (\mathrm{mod}\ q)\}\bigr|}{Q}
=\frac{\bigl|f^{-1}(\{a\})\cap {\mathcal{C}}_Q\bigr|}{Q}.
\]
First we prove the following Lemma.
\begin{lemma}\label{lemma:distribution-modq}
$Y$ is uniform over ${\mathbb{Z}}_q$ if and only if $q|Q.$
\end{lemma}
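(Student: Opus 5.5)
Write $Q=kq+s$ with $k=\lfloor Q/q\rfloor$ and $0\le s<q$; this is division with remainder. The plan is to count, for each residue $a\in\mathbb Z_q$ (represented by $a\in\{0,\dots,q-1\}$), the number
\[
N_a=\bigl|\{x\in{\mathcal{C}}_Q:\ x\equiv a \pmod q\}\bigr| .
\]
Then read off uniformity from the formula $\Pr[Y=a]=N_a/Q$ displayed just before the statement.

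The elements of ${\mathcal{C}}_Q=\{0,\dots,Q-1\}$ congruent to $a$ are $a, a+q, a+2q,\dots$ up to $Q-1$. Split ${\mathcal{C}}_Q$ into the $k$ full blocks $\{jq,\dots,jq+q-1\}$ ($0\le j<k$) and the tail $\{kq,\dots,kq+s-1\}$. Each full block contains exactly one representative of every residue class. The tail contains exactly one representative of each $a<s$ and none of each $a\ge s$. Hence $N_a=k+1$ for $0\le a<s$ and $N_a=k$ for $s\le a<q$.

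The two directions then follow.

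If $q\mid Q$, then $s=0$. So $N_a=k=Q/q$ for every $a$, and $\Pr[Y=a]=1/q$, which means $Y$ is uniform.

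Conversely, suppose $q\nmid Q$, so $1\le s\le q-1$. Then the residue $0$ has $N_0=k+1$ and the residue $q-1\ge s$ has $N_{q-1}=k$. These probabilities differ, so $Y$ is not uniform. An alternative for this direction: uniformity would force $Q\cdot\frac1q=N_a\in\mathbb Z$, hence $q\mid Q$.

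There is no real obstacle here. The only point needing care is the case $Q<q$, where $k=0$. The counting still works: residues $a<Q$ get $N_a=1$ and the others get $N_a=0$, and $q\nmid Q$ automatically. Recording $N_a$ explicitly will also feed directly into part (2) of Lemma \ref{lem:distribution-modq-1d}. There the statistical distance works out to $\frac{s(q-s)}{qQ}\le\frac{q}{4Q}$ via the AM–GM inequality.
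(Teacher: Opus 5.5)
Your proof is correct, but it is organized differently from the paper's. The paper uses a single chain of equivalences: uniformity means $N_a/Q=1/q$ for every $a$, i.e.\ $N_a=Q/q$, and this is asserted to be equivalent to $Q/q\in\mathbb{Z}$, i.e.\ to $q\mid Q$.

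The forward half of that chain is exactly your ``alternative'' integrality argument. The backward step is different. That $q\mid Q$ forces every residue class to contain exactly $Q/q$ elements of $\mathcal{C}_Q$ does not follow formally from $Q/q$ being an integer, and the paper leaves it implicit. It needs precisely the block decomposition you write out.

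Your explicit count ($N_a=k+1$ for $a<s$, $N_a=k$ otherwise) therefore proves both directions completely and handles the edge case $Q<q$ transparently. It is also the same formula the paper later invokes only as ``a standard counting argument'' when bounding $\Delta\le q/(4Q)$, so your version serves both lemmas. The paper's integrality route is shorter, but only for the ``only if'' direction.
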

\begin{proof}
    $Y$ is uniform in ${\mathbb{Z}}_q$ $\Leftrightarrow$ 
    $$Pr[Y=a]=\frac{1}{q} \Leftrightarrow \frac{\bigl|\{x\in {\mathcal{C}}_Q:\ x\equiv a \ (\mathrm{mod}\ q)\}\bigr|}{Q}=\frac{1}{q}\Leftrightarrow$$
    $$\bigl|\{x\in {\mathcal{C}}_Q:\ x\equiv a \ (\mathrm{mod}\ q)\}\bigr|=\frac{Q}{q}\in {\mathbb{Z}}\Leftrightarrow q|Q.$$
\end{proof}
 We have the following Lemma.
\begin{lemma}
If $q\nmid Q,$ then
\[
0<\Delta\!\bigl(Y,\mathcal U(\mathbb Z_q)\bigr)
\le \frac{q}{4Q}.
\]
\end{lemma}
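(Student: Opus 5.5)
Write $Q=kq+r$ with $k=\lfloor Q/q\rfloor$ and $0<r<q$; the remainder is nonzero because $q\nmid Q$. First we count fibers. The set ${\mathcal C}_Q=\{0,\dots,Q-1\}$ splits into $k$ complete blocks of $q$ consecutive integers, followed by the partial block $\{kq,\dots,kq+r-1\}$. Each complete block contributes exactly one element to every residue class, and the partial block contributes one more element to the residues $0,\dots,r-1$. Hence
\[
\Pr[Y=a]=\frac{k+1}{Q}\ (0\le a<r),\qquad \Pr[Y=a]=\frac{k}{Q}\ (r\le a<q).
\]

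\textbf{Step 1 (positivity).} Positivity follows immediately from Lemma~\ref{lemma:distribution-modq}: since $q\nmid Q$, $Y$ is not uniform, so $\Delta>0$.

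\textbf{Step 2 (exact distance).} From the fiber counts, note that $\frac{1}{q}-\frac{k}{Q}=\frac{Q-kq}{qQ}=\frac{r}{qQ}$ and $\frac{k+1}{Q}-\frac1q=\frac{q-r}{qQ}$. Therefore
\[
\Delta\bigl(Y,\mathcal U(\mathbb Z_q)\bigr)=\frac12\Bigl(r\cdot\frac{q-r}{qQ}+(q-r)\cdot\frac{r}{qQ}\Bigr)=\frac{r(q-r)}{qQ}.
\]

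\textbf{Step 3 (maximize over $r$).} By AM--GM, $r(q-r)\le q^2/4$ for any $r$, so $\Delta\le \frac{q^2/4}{qQ}=\frac{q}{4Q}$, which is the claimed bound.

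The only delicate point is the bookkeeping in the fiber count, namely getting the residue classes $0,\dots,r-1$ and the $\pm$ signs right. Everything after that is a one-line computation.
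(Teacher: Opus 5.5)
Your proposal is correct and follows the paper's proof essentially line for line: the same division $Q=kq+r$ with $0<r<q$, the same fiber counts, the exact value $\Delta=r(q-r)/(qQ)$, and the bound $r(q-r)\le q^2/4$. As a minor aside, positivity can also be read off directly from this exact formula, since $0<r<q$ forces $r(q-r)>0$.
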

\begin{proof}
Let \(Q=qt+r\) with \(0<r<q\) (since $q\nmid Q$ we get $r>0$).
A standard  counting argument gives
\[
\Pr[Y=a]=
\begin{cases}
\dfrac{t+1}{Q}, & \text{if } a\in\{0,\ldots,r-1\},\\[6pt]
\dfrac{t}{Q}, & \text{otherwise.}
\end{cases}
\]
Set $\Delta=\Delta\bigl(Y,\mathcal U(\mathbb Z_q)\bigr).$
Since $q\nmid Q$, $Y$ is not uniform i.e. $\Delta> 0.$
    
    The statistical distance is
\[
\Delta=\frac12 \sum_{a\in\mathbb Z_q}\Bigl|\Pr[Y=a]-\tfrac1q\Bigr|
= \frac{1}{2}\bigg(r\Big|\frac{t+1}{Q}-\frac{1}{q}\Big|+(q-r)\Big|\frac{t}{Q}-\frac{1}{q}\Big|\bigg)=
\]
\[
\frac{1}{2}\bigg(r\Big|\frac{tq+q-Q}{Qq}\Big|+(q-r)\Big|\frac{tq-Q}{Qq}\Big|\bigg)=
\frac{r(q-r)}{Q\,q}.
\]
Finally, since $r(q-r)\le q^2/4$, we obtain
\[
\Delta\le \frac{q}{4Q}.
\]
\end{proof}

\section{}\label{upper_bound_for_smooth+parameter}
We provide the proof of Lemma \ref{Lemma:smoothing-bound2}.
In order to show that $\eta_{\varepsilon}({\mathcal{L}})>\sigma,$
it is enough to show that for $\varepsilon<1/100,$ there is $\sigma$ such that
\[
\rho_{1/\sigma}( \mathcal{L}^* \backslash \{\mathbf{0}\} )> \varepsilon.
\]
From Banaszczyk’s transference Theorem \cite{banaszczyk}, we know that
\[
1 \leq \lambda_n(\mathcal{L})\lambda_1(\mathcal{L}^*) \leq n.
\]
This can also be written as 
\begin{equation}\label{transference-theorem}
\frac{\lambda_n(\mathcal{L})}{n} \leq \frac{1}{\lambda_1(\mathcal{L}^*)}.    
\end{equation}Now, let $\sigma = \frac{1}{\lambda_1(\mathcal{L^*})}$ and $\mathbf{z}$ be a lattice point in $ \mathcal{L^*}$ with norm $\lambda_1({\mathcal{L^*}}).$ We know that 
\[
\rho_{1/\sigma}( \mathcal{L}^* \backslash \{\mathbf{0}\} )= \sum_{{\bf y} \in \mathcal{\mathcal{L}^* \backslash \{\mathbf{0}\}}} e^{-\pi \|{{\bf y}\|^2 \sigma^2}}
\]
and since $\mathbf{z} \in \mathcal{L}^* \backslash\{\mathbf{0}\}$, we get $\rho_{1/\sigma}( \mathcal{L}^* \backslash \{\mathbf{0}\} ) \ge \rho_{1/\sigma}(\mathbf{z}).$ We compute
\[
\rho_{1/\sigma}(\mathbf{z}) = e^{-\pi \|\mathbf{z}\|^2\sigma^2  } =  e^{-\pi \lambda_1^2(\mathcal{L}^*) / \lambda_1^2(\mathcal{L}^* )} = e^{-\pi}
\]
where $e^{-\pi} \approx 0.04322 > 1/100.$ So $\rho_{1/\sigma}( \mathcal{L}^* \backslash \{\mathbf{0}\} ) > 1/100,$ thus, $\eta_{\varepsilon}({\mathcal{L}})>\sigma$ (for $\varepsilon<1/100).$ Combining with inequality (\ref{transference-theorem}) and the fact $\frac{1}{\lambda_1(\mathcal{L}^*)}=\sigma$, we get
\[
\frac{\lambda_n(\mathcal{L})}{n} \leq \frac{1}{\lambda_1(\mathcal{L}^*)} < \eta_{\varepsilon}(\mathcal{L}).
\]
\end{document}